\newif\ifFull
\Fullfalse
\documentclass{cccg14}
\pagestyle{plain}
\usepackage{amsmath, amssymb, algorithm, algorithmic, color}
\usepackage{enumerate, graphicx, hyperref}
\graphicspath{{figures/}}
\usepackage{cite}
\usepackage{wrapfig}
\usepackage{todonotes}

\usepackage{mathptm}
\DeclareSymbolFont{largesymbols}{OMX}{cmex}{m}{n}



\newcommand{\R}{\mathbf{R}}

\renewcommand{\subsection}[1]{\paragraph{\textbf{\normalsize #1.}}}

\title{
    Windows into Geometric Events:\\
    Data Structures for Time-Windowed \\
Querying of Temporal Point Sets}
\author{
    Michael~J.~Bannister
\thanks{
Dept. of Comp. Sci.,
U.~of CA, Irvine, 
\texttt{mbannist(at)uci.edu}
}
    \and
William~E.~Devanny 
\thanks{
Dept. of Comp. Sci.,
U.~of CA, Irvine, 
\texttt{wdevanny(at)uci.edu}
}
\and
Michael~T.~Goodrich 
\thanks{
Dept. of Comp. Sci.,
U.~of CA, Irvine, 
\texttt{goodrich(at)uci.edu}
}
\and
Joseph A. Simons
\thanks{
Dept. of Comp. Sci.,
U.~of CA, Irvine, 
\texttt{jsimons(at)uci.edu}
}
\and
Lowell~Trott
\thanks{
Dept. of Comp. Sci.,
U.~of CA, Irvine, 
\texttt{ltrott(at)uci.edu}
}
}

\index{Bannister, Michael}
\index{Devanny, William E.}
\index{Goodrich, Michael T.}
\index{Simons, Joseph A.}
\index{Trott, Lowell}

\begin{document}
\date{}
\maketitle
\begin{abstract}
\ifFull 
We study geometric data structures for sets of point-based temporal events, such as geo-tagged scientific observations, sporting actions, or disaster responses, so as to answer real-time \emph{time-windowed} queries, each of which is specified by a range of event times, $[t_1,t_2]$, and a geometric question to be answered for the contiguous subsequence of point events with time stamps in this range. In particular, we consider time-windowed queries for skylines in $\mathbf{R}^d$, convex hulls in $\mathbf{R}^2$ and proximity based queries in $\mathbf{R}^d$. For skyline problems, we provide a data structure using near-linear space and preprocessing that can report the skyline in time proportional to its size. For convex hull problems, we provide a data structures using near-linear space and preprocessing that can in polylogarithmic time answer various convex hull queries, including gift wrapping, hull reporting, tangent finding, linear programming, line decision, line stabbing, membership, and containment. For proximity problems, we again provide data structures using near-linear space and preprocessing that can in polylogarithmic time answer approximate spherical range queries, approximate nearest neighbor queries, and in linear time construct many proximity graphs for a given query window.
\else
We study geometric data structures for sets of point-based temporal events, answering \emph{time-windowed} queries, i.e., given a contiguous time interval we answer common geometric queries about the point events with time stamps in this interval. The geometric queries we consider include queries based on the skyline, convex hull, and proximity relations of the point set. We provide space efficient data structures which answer queries in polylogarithmic time.
\fi
\end{abstract}

\section{Introduction}
Spatio-temporal data sets deal with geometric objects associated 
with events occurring at specific times and places
(e.g., see~\cite{Erwig:1998,Roddick:1999}).
\ifFull
Examples include geo-tagged events associated with animal 
movements~\cite{hqs-seda-00}, sports~\cite{abcb-sasv-02},
and disaster responses~\cite{mbkv-pr-03}.
\fi
Thus, we consider an \emph{event} in this context
to be a triple, $(t,p,v)$, where $t$ is a 
time stamp of occurrence for this event, $p$ is a point in $\mathbf{R}^d$ 
describing the location of this event,
and $v$ is a set of 
additional data values that may also be associated with this event.
Exploring such spatio-temporal data sets is facilitated by data structures
that support queries involving these spatial and temporal attributes.

In 
a \emph{time-windowed query}~\cite{BanDubEppSmy-SODA-13}, 
we are given an interval of 
time, $[t_1,t_2]$, and a predicate, $\cal P$, and we are
interested in the events matching $\cal P$ which have time stamps in the
range $[t_1,t_2]$.
\ifFull
In particular, we are interested in the design of efficient
data structures 
for answering time-windowed geometric
queries defined on points associated with the events in a given time window.
\else
\fi
Formally, we preprocess a sequence of points 
$p_k$ in $\mathbf{R}^d$ for $0 \leq k \leq n - 1$ 
in order to answer queries on \emph{windows}
into this sequence of points, where a window is of the form 
$[p_i, p_j] = \{p_k \mid i \leq k \leq j\}$. 
We require that the runtime of our queries depends only
on the \emph{width} of the window $w = j - i + 1$, and not on $n$ the total
number of temporal points.  We assume there is a polynomial-sized set, $U$, of
identifiable moments in time, based on a reasonable way of measuring time.
E.g., each nanosecond from the birth of the sun until its projected death
can be indexed using a 64-bit integer.

Previous data structure frameworks involving point data have taken
various approaches with respect to time and location.
Traditionally, this has been broken down into two 
approaches---a \emph{static} approach, where one assumes that all the input
points are given simultaneously at ``time zero'' and then queries
are performed on this set, and 
a \emph{dynamic} approach, where points are inserted and deleted over time
and queries are performed with respect to the current 
\ifFull 
set
(e.g., see~\cite{ct-dacg-92,de2008computational}). 
\else
set.
\fi
Motivated by geo-tagging applications,
\ifFull
e.g., see~\cite{abcb-sasv-02,hqs-seda-00,mbkv-pr-03}, 
\fi
we take an event-based approach, where
point-based events appear at specific instances in time,
so that, at any point in time, there is at most a single point that exists
in our data set.
Thus, it is only
by considering \ifFull point events in\fi{} time intervals
that we get sets of points over which we can ask geometric queries. 

Of course, if the queries are themselves axis-aligned range queries, then time-windowed
queries can be answered simply by considering time as yet another 
dimension and storing the events in a $(d+1)$-dimensional range-searching 
data structure.
This approach does not carry over, however, to convex hull queries, proximity
queries, or skyline queries.
There are previous data structure approaches
that have nevertheless considered other variations
with respect to time, updates, and queries.
In batched dynamic querying, for instance,
a set of queries is given in advance for a 
static set of points~\cite{Edelsbrunner1985515}, 
and in off-line geometric querying, queries are performed 
in ``the past'' with respect to a pre-specified sequence of 
updates~\cite{Agarwal199165,hs-olmpc-96}. 
In time-windowed querying, on the other hand,
we don't know the time windows or the queries being requested 
in those windows in advance, and we do not restrict 
ourselves to windows starting at ``time zero.''

Likewise, time-windowed querying is 
not the same as the persistent 
data structure framework (e.g., see~\cite{Driscoll198986}),
 where a sequence of insertions and deletions is performed 
on a data structure in an online fashion, with that data structure 
adapted to allow for queries coming later that are done ``in the past.''
Time-windowed data structures allow for different ``starting times'' 
for such sequences of operations, 
whereas these previous persistent approaches start at 
``time zero.''

Our model is also \ifFull related to but fundamentally\fi{} different from previous work
on geometric querying on concatenable structures 
for ordered decomposable problems 
(e.g., see~\cite{Grossi19991,Vankreveld1994130}).
In this framework, a set of objects is given in some order, such as
$x$-coordinates, and geometric queries are performed on this set, subject to splits and merges along one of these dimensions.
However, it's not clear how to apply these previous
approaches to time-windowed queries,
since their results depend on decomposing the data set based on the 
\emph{geometry} of the points, whereas the time-windowed framework
instead supports queries based on the
\emph{time-stamps} of the points, which are unrelated to their geometry. 

Our approach is also related to but distinct from previous work
on \emph{kinetic data structures} (e.g., see~\cite{Basch:1997}).
In this framework, each point \ifFull in a geometric data set\fi{} has a given trajectory
that describes its movement over time, subject to trajectory updates and
queries involving point configurations that would exist
at given times based on the current set of trajectories.
In the time-windowed framework, on the other hand, points
exist only as events that occur at specific times; hence, they
are not given with trajectories.

Chan~\cite{c-dpchnlat-01} applies a query oriented approach to maintaining a dynamic convex
hull. 
However,
his structure 
is only suited
to answering queries on the current set of data points, not on windows in time.  

Perhaps the most closely related prior work 
for geometric data is that of Shi and JaJa~\cite{ShiJaj-ISAAC-04},
 which also considers 
geometric queries on time-windows of temporal data. 
However, they only consider 
``conjunctive temporal range search'' queries, 
which are fundamentally different than
the types of queries we consider.

Time-windowed querying was also considered by
Bannister {\it et al.}~\cite{BanDubEppSmy-SODA-13},
for relational network data and queries based on graph-theoretic primitives.
They give a number of efficient 
data structures for answering such queries, and we adapt some of their methods to the problem of computing skylines.
However, their methods
do not translate into efficient data structures for convex hulls or proximity queries.

\subsection{Our Results}
In this paper, we study data structures for geometric queries based on the
skyline, proximity, and convex hull of points in the time-windowed query model.
If the width of the query window is fixed, data structures supporting windowed queries can be built using existing persistent data structures. The difficulty here, however, is that the window must be fixed in advance, which is not typically useful for data exploration purposes. For this reason we place no \emph{a priori} restrictions on the query windows.

We consider the problem of performing
convex hull queries on points in $\mathbf{R}^2$ within a time window. 
Previously, the problem of reporting  the convex hull of points within 
a two-dimensional query rectangle 
has been considered~\cite{Vankreveld1994130, Grossi19991}, 
but such results do not extend to our time-windowed queries, of course.
The method used here is based instead on hierarchical decomposition in time. 
We build a data
structure of size $O(n \log n)$ in time $O(n \log n)$ from which we can answer 
time-windowed queries based on the 
convex hull of points in the window, in polylogarithmic time\ifFull
(see Table~\ref{tab:convex})\fi.

We also
modify the decomposition tree used for convex hull queries to answer windowed proximity based queries. We develop data structures answering approximate nearest neighbor queries, using near-linear space and polylogarithmic query time. In addition, we develop data structures to construct proximity based graphs for a given window, e.g., Delaunay triangulation, minimum spanning tree, nearest neighbor graph and Gabriel graph, in near-linear space and linear time.

Finally, we consider the problem of reporting the skyline, and for colored points the problem of reporting the set of unique colors on the skyline. 
Computing the skyline of a data set is 
classically known as the \emph{maxima set problem}~\cite{Kung:1975} and is
important in multi-criteria decision 
making~\cite{BorKosSto-ICDE-2001, MCO-Surv}\ifFull,\else.\fi
\ifFull
e.g., when purchasing a car, one may consider a car's 
affordability (inverse of price) and performance. 
If one car dominates another in both affordability and performance,
then the dominated car need not be considered. For this reason we should only consider cars appearing on the skyline. If we color the cars by their make, then a query for the distinct colors on the skyline will tell use which makes have a model that is dominant in both affordability and performance.
\else
\fi
\ifFull 
We build a data structure of size $O(n^{1+\epsilon})$ in time $O(n^{1+\epsilon})$ that
outputs the skyline, distinct colors on the skyline, or size of the skyline for a given time window in time proportional to the output. We achieve these results by adapting the methods used by Bannister {\it et al.}~\cite{BanDubEppSmy-SODA-13}.
\else
We achieve these results by adapting the methods used by Bannister {\it et
al.}~\cite{BanDubEppSmy-SODA-13}. Due to space constraints, details of this and
other theorems and lemmas are given in the appendix.
\begin{theorem}
A sequence of temporal points, $p_i$ for $0 \leq i < n$, in $\mathbf{R}^d$ can be preprocessed into a data structure of size $O(n^{1+\epsilon})$ in $O(n^{1+\epsilon})$ time such that a query for the skyline of $[p_i, p_j]$ can be reported in $O(k)$. Furthermore, if the points are colored then the distinct colors on the skyline can be reported in time $O(k)$.
\end{theorem}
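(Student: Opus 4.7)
The plan is to reduce time-windowed skyline membership to a three-dimensional range-style condition and then to process it with the $\Delta$-ary time-decomposition tree of Bannister et al. For each point $p_k$, let $L_k$ be the largest index $l<k$ such that $p_l$ dominates $p_k$ (or $-\infty$) and $R_k$ the smallest index $l>k$ with the same property (or $+\infty$). A direct case analysis on where the dominators of $p_k$ can lie gives the characterization
\[
p_k \in \mathrm{skyline}([p_i,p_j]) \iff i\le k\le j,\ L_k<i,\ R_k>j.
\]
All pairs $(L_k,R_k)$ are computed in $O(n\log^{d-1}n)$ time by one forward and one backward sweep through an online dominance data structure, which fits comfortably inside the $O(n^{1+\epsilon})$ preprocessing budget.

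Next, build a balanced tree $T$ of branching factor $\Delta=\lceil n^{\epsilon}\rceil$ and depth $h=O(1/\epsilon)$ over the indices $0,\dots,n-1$. For every interior node $v$ with children $c_1,\dots,c_\Delta$ covering index ranges $R_1,\dots,R_\Delta$, and every consecutive union $I_{a,b}=R_a\cup\cdots\cup R_b$, store
\[
S_{a,b}(v) = \{\, p_k : k\in I_{a,b},\ L_k<\min I_{a,b},\ R_k>\max I_{a,b}\,\},
\]
which by the characterization above is exactly $\mathrm{skyline}(I_{a,b})$. A single point appears in at most $O(\Delta^2)$ consecutive-union skylines per ancestor in $h$ ancestors, so with $\epsilon$ suitably rescaled both the total storage and the bottom-up construction time stay within $O(n^{1+\epsilon})$.

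To answer a query $[p_i,p_j]$, descend in $T$ to the deepest node whose range contains both indices, decompose the window into a central block of consecutive children---whose skyline is directly available as an $S_{a,b}(v)$---and two boundary-aligned flank subqueries that recurse into the children containing $i$ and $j$ with depth reduced by one. Unrolling produces $O(\Delta h)=O(n^{\epsilon})$ precomputed pieces whose union contains $\mathrm{skyline}([p_i,p_j])$ plus some inter-piece dominated points. The main obstacle is extracting the true skyline in $O(k)$ time rather than time proportional to the sum of the piece sizes. I plan to handle this by equipping each stored point $p_k\in S_{a,b}(v)$ with \emph{cross-dominator} pointers: for each of the $n^{O(\epsilon)}$ sibling pieces that can co-occur with $S_{a,b}(v)$ in a legal decomposition, pre-store the index of the closest dominator of $p_k$ in that sibling, or $\bot$ if none exists. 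Since the number of co-occurrence patterns is $n^{O(\epsilon)}$, the pointers fit in the $O(n^{1+\epsilon})$ budget. At query time the co-occurrence pattern is fixed by the decomposition, so each candidate point is emitted or skipped by a single $O(1)$ pointer test, and chaining ``next surviving point'' pointers through the dominated runs lets us visit only the $k$ output points.

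For the colored version I would further augment each $S_{a,b}(v)$ with the list of its distinct colors in order of first occurrence, together with ``next-new-color'' links chained through the precomputed pieces in the spirit of the distinct-elements-in-a-window structure of Bannister et al.~\cite{BanDubEppSmy-SODA-13}. A traversal that emits only the earliest appearance of each color in the merged output reports the distinct skyline colors in $O(k)$ time, matching the stated bound.
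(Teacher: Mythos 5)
Your reduction is the same one the paper uses: your condition $L_k<i$, $R_k>j$ is exactly the paper's Lemma~\ref{lem:stab} (with $\pi(k)=L_k+1$, $\phi(k)=R_k-1$), i.e., $p_k$ is on the windowed skyline iff $(i,j)$ stabs the rectangle $[L_k+1,k]\times[k,R_k-1]$, and the sweep with a dynamic dominance structure to compute these values matches the paper's preprocessing. Where the paper then simply observes that the query \emph{is} a (colored) rectangle stabbing query and delegates to the known $O(n^{1+\epsilon})$-space, $O(k)$-reporting structures of Eppstein--Muthukrishnan~\cite{EppMut-SODA-01} and Agarwal \emph{et al.}~\cite{AgaGovSat-ESA-2002}, you instead try to build a bespoke $\Delta$-ary structure, and that half of your argument has a genuine gap.

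The gap is in the filtering step. First, the count of ``pieces that can co-occur with $S_{a,b}(v)$'' is not $n^{O(\epsilon)}$: the two flank subqueries recurse, so the co-occurring pieces are right- (resp.\ left-) aligned consecutive unions $S_{a',\Delta}(u)$ at \emph{descendant} nodes $u$ of $c_{a-1}$ (resp.\ $c_{b+1}$), one per level along an arbitrary root-to-leaf path; ranging over all paths this is $\Theta(m)$ pieces for a subtree with $m$ leaves, so per-point cross-dominator pointers to all of them cost $\Omega(n^2)$ space in the worst case. Second, even granting the pointers, the ``next surviving point'' links cannot be precomputed: by your own characterization, whether $p_k\in S_{a,b}(v)$ survives is the test $L_k<i$ and $R_k>j$, which depends on the actual query endpoints $(i,j)$ and not merely on which pieces appear in the decomposition; the set of survivors within one stored list varies over $\Theta(|S_{a,b}(v)|^2)$ distinct $(i,j)$-outcomes, so a single chained skip list per co-occurrence pattern does not let you jump over dominated runs in $O(1)$ each. (Note also that once you have $(L_k,R_k)$, cross-dominator pointers are redundant --- the only remaining problem is reporting, from precomputed lists, the rectangles stabbed by $(i,j)$ in time $O(k)$ with no additive term, which is precisely the nontrivial data-structure problem the paper resolves by citation rather than reproving.) The colored claim inherits the same issue; the paper gets it from the colored version of the same stabbing result. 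To repair your write-up, replace the tree-plus-pointers machinery with an invocation of (or a correct construction for) output-sensitive colored rectangle stabbing.
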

\fi

\ifFull
\begin{table}
\centering
\small
\vspace{-1em}
\begin{tabular}{|lc|}
\hline
Gift wrap \hspace{10em} &  $O(\log^2 w)$\\
Reporting & $O(h\log^2 w)$\\
Tangent query & $O(\log^2w)$\\
Linear programming & $O(\log w)$\\
Line decision & $O(\log w)$\\
Line stabbing & $O(\log^2 w)$\\
Vertical line stabbing & $O(\log^2 w)$\\
Containment & $O(\log^2 w)$\\
Membership & $O(\log^2 w)$ \\
\hline
\end{tabular}
\caption{Running times for convex hull queries, where $w$ and $h$ are the sizes of the window and convex hull.}
\label{tab:convex}
\vspace{-1em}
\end{table}
\fi

\ifFull
\section{Skyline}\label{app:skyline}
\ifFull
\begin{figure}[t]
\vspace{-1em}
\begin{minipage}[t]{0.5\linewidth}
\centering
\includegraphics[height=1.25in]{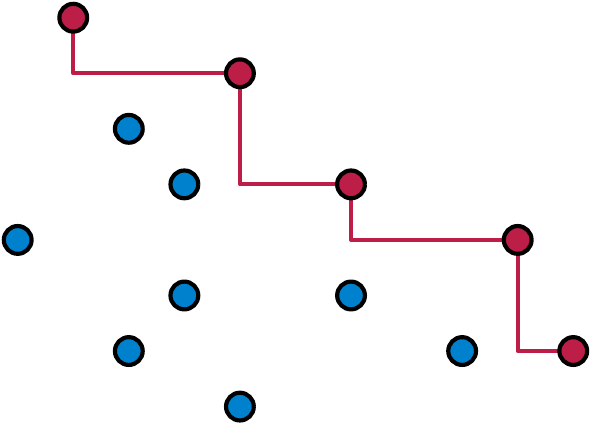}
\vspace{-1em}
\caption{The skyline in this set of points.}
\label{fig:skyline-example}
\end{minipage}
\begin{minipage}[t]{0.5\linewidth}
\centering
\includegraphics[height=1.25in]{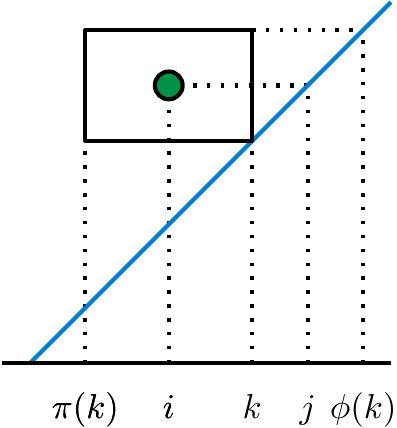}
\vspace{-1em}
\caption{Rectangle stabbing inequality}
\label{fig:rectangle-stabbing}
\end{minipage}
\vspace{-1em}
\end{figure}
\else
\begin{wrapfigure}{r}{1.8in}
\centering
\vspace{-2em}
\includegraphics[width=1.7in]{rectangle-stabbing}
\vspace{-1em}
\caption{Stabbing inequality}
\vspace{-1em}
\label{fig:rectangle-stabbing}
\end{wrapfigure}
\fi
In this section, we form a general method for answering queries for the set of maximal elements under a preordered relation and then specialize this method to the problem of computing the skyline \ifFull(See Fig.~\ref{fig:skyline-example})\fi. Previously the problem of computing the skyline for a fixed-width sliding window was considered by Lin {\it et al.}~\cite{LinYuaWan-ICDE-2005}. The fastest algorithms for skyline run in time $O(n\log^{d-3} n)$ 
in the worst case for points in $\mathbf{R}^d$ for fixed $d \geq 4$~\cite{ChaLarPat-SOCG-2011}, and the fastest output sensitive algorithms run in time $O(n\log^{d-2} k)$ for fixed $d \geq 3$~\cite{KirSeid-SOCG-1985}.

Given a set, $S,$ a binary relation, $<$, is said to be an \emph{(irreflexive) preorder} if (1) for
all $a \in S$, $a \not < a$ (irreflexive); (2) for all $a,b,c \in S$ if $a < b$ and $b < c$,
then $a < c$ (transitive). An element, $x$, in a subset, $E$, of $S$ is said to be \emph{maximal}
in $E$ if there does not exist a $y \in E$ with $x < y$. Finally, given a
sequence, $e_i$ for $0 \leq i < n$, of elements from a preordered set $S$ we
define for each element $e_k$ the function $\phi(k)$ to be the largest $j'$ such
that $e_k \not< e_j$ whenever $k \leq j \leq j'$; and, $\pi(k)$ to be the
smallest $i'$ such that $e_k \not< e_i$ whenever $i' \leq i \leq k$. An element
$e_k$ is a maximal element of $[e_i,e_j]$ precisely when $pi(k) \leq i$ and $j \leq \phi(k)$, yielding the following lemma illustrated in Fig.~\ref{fig:rectangle-stabbing}.

\begin{lemma}\label{lem:stab}
An element $e_k$ is a maximal element of the set $[e_i,e_j]$ if and only if $\pi(k) \leq i \leq k \leq j \leq \phi(k)$; equivalently, $e_k$ is a maximal element of the set $[e_i,e_j]$ if and only if the point $(i,j)$ stabs the rectangle $[\pi(k), k] \times[k, \phi(k)]$.
\end{lemma}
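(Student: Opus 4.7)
The plan is to prove the equivalence by unpacking the definitions of $\pi(k)$ and $\phi(k)$ directly. The second ``stabbing'' formulation is an immediate rewriting of the inequality chain as membership of the point $(i,j)$ in the axis-aligned rectangle $[\pi(k), k] \times [k, \phi(k)]$, so I would only write out the first form and then remark on the equivalence.

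For the forward direction, I would assume $e_k$ is maximal in $[e_i, e_j]$ and note that $i \le k \le j$ is automatic since $e_k$ must belong to the window. To obtain $j \le \phi(k)$, I would argue by contradiction: if $j > \phi(k)$, then $j$ itself fails the defining condition of $\phi(k)$ (because $\phi(k)$ was chosen to be the \emph{largest} such upper bound), so there exists an index $j^{*}$ with $k \le j^{*} \le j$ and $e_k < e_{j^{*}}$. Since $j^{*}$ lies in $[i,j]$, the element $e_{j^{*}}$ witnesses non-maximality of $e_k$, contradicting the hypothesis. A symmetric contradiction, using the minimality in the definition of $\pi(k)$, gives $\pi(k) \le i$.

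For the reverse direction, I would assume $\pi(k) \le i \le k \le j \le \phi(k)$ and check maximality directly. Any index $\ell \in [i,j]$ either satisfies $k \le \ell \le j \le \phi(k)$ or $\pi(k) \le i \le \ell \le k$; the first case gives $e_k \not< e_{\ell}$ by the defining property of $\phi(k)$ and the second by the defining property of $\pi(k)$. Thus no element of the window strictly dominates $e_k$, so $e_k$ is maximal. Finally, the rectangle stabbing form is just the observation that $\pi(k) \le i \le k$ together with $k \le j \le \phi(k)$ is exactly the statement $(i,j) \in [\pi(k), k] \times [k, \phi(k)]$.

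There is no real obstacle here; the lemma is essentially a tautological reformulation, and the only subtlety is being careful that $\phi(k)$ (resp.\ $\pi(k)$) is defined as an extremum of indices satisfying a universally quantified condition, so exceeding it forces the condition to fail at some intermediate index, not necessarily at the boundary. Making that argument precise in the contrapositive step is the one spot that deserves a careful sentence.
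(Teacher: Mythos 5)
Your proof is correct and follows the same route the paper takes: the paper treats the lemma as an immediate consequence of the definitions of $\pi$ and $\phi$ (it offers only the one-sentence observation preceding the lemma statement), and your argument is simply a careful unpacking of that same observation, including the right handling of the extremal/universally-quantified definitions of $\pi(k)$ and $\phi(k)$ in the contrapositive step. Nothing further is needed.
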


In general, the method for computing $\phi$ is to initialize a data structure
$C$. Then process the temporal-points in order. For each temporal-point $e_k$ we query $C$ to find and set $\phi[e] = k$ for all elements $e$ less than $e_k$. Then these points are removed from $C$ and $e_k$ is added to $C$.  The computation of $\pi$ uses the same algorithm, but processes points in reverse order. With $\phi$ and $\pi$ computed, the problem is now reduced to the rectangle stabbing problem, for which we use existing data structures.

\begin{lemma}[Eppstein \emph{et al.} \cite{EppMut-SODA-01} and Agarwal \emph{et al.} \cite{AgaGovSat-ESA-2002}]
A set of $n$ rectangles whose endpoints lie on the grid $[0,n]\times [0,n]$ can be preprocessed into a data structure of size $O(n^{1+\epsilon})$ in $O(n^{1+\epsilon})$ time that can report the rectangles stabbed by a query point in $O(k)$ time and count them in $O(1)$ time, where $k$ is the number of rectangles reported. If the rectangles are colored, the set of distinct colors stabbed can be reported in time $O(k)$, where $k$ is the number of colors reported.
\end{lemma}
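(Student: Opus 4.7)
The plan is to reduce the windowed skyline problem to a two-dimensional rectangle stabbing problem via Lemma~\ref{lem:stab} applied to the dominance preorder on $\mathbf{R}^d$. Recall that a point $p_k$ dominates $p_\ell$ (written $p_\ell < p_k$) iff $p_k$ is coordinatewise at least as large as $p_\ell$ and strictly larger in some coordinate. This is an irreflexive preorder, and a point is on the skyline of a window exactly when it is maximal under this preorder. Thus by Lemma~\ref{lem:stab}, if we compute the intervals $\pi(k)$ and $\phi(k)$ for every $k$, associate to each index $k$ the axis-aligned rectangle $R_k=[\pi(k),k]\times[k,\phi(k)]$, and build a rectangle-stabbing structure on $\{R_0,\dots,R_{n-1}\}$, then the skyline of $[p_i,p_j]$ is precisely the set of $p_k$ whose rectangle $R_k$ is stabbed by the query point $(i,j)$.

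Next I would compute $\phi$ and $\pi$. For $\phi$, sweep the indices in increasing order, maintaining a dynamic dominance structure $C$ that stores the current ``candidate'' points (those not yet known to be dominated by any later point). When processing $p_k$, I would use $C$ to enumerate and delete every stored point $p_\ell$ with $p_\ell < p_k$, assigning $\phi(\ell)=k-1$ during the deletion; then insert $p_k$ into $C$. Any point that is never dominated by a later arrival gets $\phi(\ell)=n-1$. The symmetric sweep in decreasing order yields $\pi$. Because each point is inserted and deleted at most once across the entire sweep, the total cost is bounded by $n$ insertions plus $n$ deletions plus the reporting of dominated points; using a dynamic dominance structure with $O(n^\epsilon)$ amortized cost per operation (for example, a range tree on the remaining $d-1$ coordinates with fractional cascading and lazy deletion, or the multi-dimensional dominance deletion structure referenced in the paper) this fits within the $O(n^{1+\epsilon})$ preprocessing budget.

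With the rectangles in hand and all coordinates lying on the integer grid $[0,n]\times[0,n]$ by construction, I would feed them directly into the Eppstein--Muthukrishnan / Agarwal \emph{et al.} structure cited just above the theorem. This gives a structure of size $O(n^{1+\epsilon})$ built in $O(n^{1+\epsilon})$ time that, on query point $(i,j)$, reports the stabbed rectangles in $O(k)$ time, where $k$ is the output size. Translating back through Lemma~\ref{lem:stab}, this is exactly the skyline of the window $[p_i,p_j]$ in $O(k)$ time. For the colored variant, I would color each rectangle $R_k$ with the color of $p_k$ and invoke the distinct-colors version of the same structure, which also reports in time linear in the number of colors returned; the output of that query is precisely the set of distinct colors appearing on the windowed skyline.

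The main obstacle will be choosing a dominance data structure $C$ for arbitrary fixed $d$ whose aggregate work over the preprocessing sweep stays within $O(n^{1+\epsilon})$ while supporting the ``find-and-delete-all-dominated'' primitive output-sensitively. In dimensions $d\le 2$ a balanced BST keyed on the remaining coordinate suffices, and for general fixed $d$ a standard dominance range tree with periodic rebuilding handles the deletions within the claimed bound. Once that component is in place, the rest of the argument is a clean black-box composition of Lemma~\ref{lem:stab} with the cited rectangle-stabbing result.
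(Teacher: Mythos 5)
Your proposal does not prove the statement it is attached to. The statement is the rectangle-stabbing lemma itself: given $n$ rectangles on the grid $[0,n]\times[0,n]$, build a structure of size $O(n^{1+\epsilon})$ in $O(n^{1+\epsilon})$ time supporting $O(k)$-time stabbing reporting, $O(1)$-time counting, and $O(k)$-time distinct-color reporting. In the paper this is a black-box citation of prior work (Eppstein \emph{et al.} and Agarwal \emph{et al.}), and a proof of it would have to describe the stabbing data structure --- e.g.\ how interval/segment-tree-style decompositions with $n^{\epsilon}$ fan-out achieve constant depth and hence $O(1)$ counting and output-sensitive reporting, and how the colored variant avoids reporting duplicate colors. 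Your write-up instead takes this lemma as given (``I would feed them directly into the Eppstein--Muthukrishnan / Agarwal \emph{et al.} structure'') and proves the downstream skyline theorem: the reduction via Lemma~\ref{lem:stab}, the sweeps computing $\phi$ and $\pi$ with a dynamic dominance structure, and the final composition. That is a proof of a different (later) result in the paper, not of this lemma.

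For what it is worth, the argument you did write closely tracks the paper's proof of the skyline theorem and of Lemma~\ref{lem:general-order}: the paper likewise computes $\phi$ by a forward sweep that deletes dominated points from a dynamic structure and $\pi$ by the reverse sweep, using Mortensen's dynamic dominance structure with $O(\log^d n)$ update and $O(\log^d n + k)$ reporting time, which gives $O(n\log^d n) \subseteq O(n^{1+\epsilon})$ total preprocessing. One small slip even there: the paper's convention (and Lemma~\ref{lem:stab}) requires $\phi(\ell)$ to be the largest index $j'$ such that $e_\ell$ is undominated throughout $[\ell,j']$, so when $p_\ell$ is first dominated by $p_k$ you should set $\phi(\ell)=k-1$ as you do --- that part is fine --- but none of this addresses the rectangle-stabbing lemma, which still needs either its own construction or an explicit acknowledgement that it is being cited rather than proved.
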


\begin{lemma}\label{lem:general-order}
A sequence of elements, $e_i$ for $0 \leq i < n$, from a preordered set can be preprocessed into a data structure of size $O(n^{1+\epsilon})$ in time $O(n^{1+\epsilon})$ that can report the maximal elements in a window $[e_i, e_j]$ in $O(k)$ time where $k$ is the number of reported elements, assuming that $\pi$ and $\phi$ can be computed in $O(n^{1+\epsilon})$ time.
\end{lemma}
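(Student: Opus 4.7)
The plan is to reduce the problem directly to rectangle stabbing using Lemma~\ref{lem:stab}, so essentially all of the work is packaging the already-stated ingredients together. First I would invoke the hypothesis that $\pi$ and $\phi$ can be computed in $O(n^{1+\epsilon})$ time, producing the arrays $\pi(k)$ and $\phi(k)$ for each $k$. Next, for every $k$ with $0 \leq k < n$, I would form the axis-aligned rectangle $R_k = [\pi(k), k] \times [k, \phi(k)]$, labeled with the element $e_k$ (or its color, if the elements are colored).

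Now I would feed the resulting set of $n$ rectangles, whose coordinates are integers in $[0,n]\times[0,n]$, into the Eppstein--Muthukrishnan / Agarwal--Govindarajan--Satish data structure cited just above the lemma. That preprocessing uses $O(n^{1+\epsilon})$ time and space and supports stabbing reports in time $O(k)$, and colored stabbing reports in time $O(k)$ as well. Adding this to the $O(n^{1+\epsilon})$ cost of building $\pi$ and $\phi$ keeps the total within the claimed bounds.

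To answer a query on the window $[e_i,e_j]$, I would stab the stabbing structure with the point $(i,j)$. By Lemma~\ref{lem:stab}, the rectangles stabbed by $(i,j)$ are exactly those $R_k$ for which $e_k$ is maximal in $[e_i,e_j]$, so the reported labels are precisely the maximal elements (or, in the colored version, their distinct colors). This gives the $O(k)$ output-sensitive query time.

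There is no real technical obstacle: everything reduces cleanly to the prior lemma together with the cited rectangle-stabbing machinery. The only mildly delicate point is verifying that the constructed rectangles really do lie on the required $[0,n]\times[0,n]$ integer grid so that the stabbing data structure applies directly, which is immediate since $0 \leq \pi(k) \leq k \leq \phi(k) \leq n-1$. If one wants to be careful about the transitive/irreflexive properties of the preorder, they are used only implicitly through Lemma~\ref{lem:stab} and the definitions of $\pi,\phi$, so no extra argument is needed here.
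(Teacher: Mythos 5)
Your proposal is correct and matches the paper's argument exactly: the paper likewise computes $\pi$ and $\phi$, forms the rectangles $[\pi(k),k]\times[k,\phi(k)]$, and reduces the query to stabbing with the point $(i,j)$ via Lemma~\ref{lem:stab} and the cited rectangle-stabbing data structure. No gaps.
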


The \emph{skyline} of a set of points in $\mathbf{R}^d$ is defined to be the maximal elements in the set under the dominance relation where a point $p$ is said to be dominated by a point $p'$ if $p[i] \leq p'[i]$ for $0 \leq i < d$ and $p \neq p'$. So our general method applies, for computing the skyline. Mortensen presents a dynamic data structure for dominance queries in $\mathbf{R}^d$ that supports insertion and deletion of points in $O(\log^{d} n)$ time and reporting of all points dominated by a given query point in $O(\log^{d} n + k)$ time where $k$ is the number of reported points \cite{Mor-SJC-06}. We can use this data structure to compute $\phi$ and $\pi$ for the dominance relation in $O(n\log^d n)$. So by Lemma~\ref{lem:general-order} we have following theorem.

\begin{theorem}
A sequence of temporal points, $p_i$ for $0 \leq i < n$, in $\mathbf{R}^d$ can
be preprocessed into a data structure of size $O(n^{1+\epsilon})$ in
$O(n^{1+\epsilon})$ time such that a query for the skyline of $[p_i, p_j]$ can
be reported in $O(k)$ time. Furthermore, if the points are colored then the
distinct colors on the skyline can be reported in $O(k)$ time.
\end{theorem}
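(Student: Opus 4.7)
The plan is to instantiate the general maximal-element framework (Lemma~\ref{lem:stab} and Lemma~\ref{lem:general-order}) for the dominance preorder on $\mathbf{R}^d$, where we declare $p < p'$ whenever $p[i] \le p'[i]$ for every coordinate and $p \neq p'$. This is irreflexive and transitive, and the skyline of any window is precisely the set of maximal elements of that window. Thus the whole task reduces to computing $\pi(k)$ and $\phi(k)$ for every index $k$ within the budget $O(n^{1+\epsilon})$, after which the stabbing reduction of Lemma~\ref{lem:general-order} delivers the claimed space, preprocessing, and output-sensitive reporting bounds simultaneously.

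For $\phi$ I would sweep the sequence from left to right while maintaining an initially empty dynamic dominance structure $C$. When $p_k$ is processed, I query $C$ for every point strictly dominated by $p_k$, assign each such $p_\ell$ the appropriate value of $\phi(\ell)$ based on $k$, remove those points from $C$, and then insert $p_k$. Any index still resident in $C$ at the end of the sweep is never dominated by a later point and receives $\phi(\ell) = n-1$. The function $\pi$ is computed symmetrically by sweeping right to left. Using Mortensen's dynamic dominance structure~\cite{Mor-SJC-06}, each insertion, deletion, and reporting operation costs $O(\log^d n)$ plus the output size, and since every index is inserted and deleted at most once over the full sweep the combined cost for $\pi$ and $\phi$ is $O(n \log^d n) \subseteq O(n^{1+\epsilon})$.

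With $\pi$ and $\phi$ in hand I form the axis-aligned rectangles $R_k = [\pi(k), k] \times [k, \phi(k)]$ and feed them to the stabbing structure of Eppstein \emph{et al.}~\cite{EppMut-SODA-01} and Agarwal \emph{et al.}~\cite{AgaGovSat-ESA-2002}, which takes $O(n^{1+\epsilon})$ preprocessing and space. A window query $[p_i, p_j]$ becomes a single point stabbing query at $(i, j)$: by Lemma~\ref{lem:stab} the rectangles stabbed correspond exactly to the skyline points of the window, and they are returned in $O(k)$ time. For the colored version I color $R_k$ with the color of $p_k$ and use the colored variant of the stabbing structure, which lists the distinct stabbed colors in time proportional to the output size.

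The main obstacle is ensuring that the sweep correctly handles ties and treats dominance as the strict, irreflexive relation that Lemma~\ref{lem:stab} requires; I would resolve this by breaking coordinate ties lexicographically on the index, which preserves irreflexivity and transitivity and does not alter the skyline. Every other cost bound follows directly from the cited dynamic dominance and colored rectangle stabbing tools, so preprocessing remains within $O(n^{1+\epsilon})$ and queries are output sensitive in $O(k)$ as claimed.
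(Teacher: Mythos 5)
Your proposal follows essentially the same route as the paper: it instantiates the general preorder/maximal-element framework for the dominance relation, computes $\pi$ and $\phi$ by forward and backward sweeps using Mortensen's dynamic dominance structure in $O(n\log^d n)$ total time, and reduces window queries to (colored) rectangle stabbing at the point $(i,j)$ exactly as in Lemma~\ref{lem:stab} and Lemma~\ref{lem:general-order}. The only additions beyond the paper's argument are minor completeness details (the boundary value for never-dominated points and lexicographic tie-breaking), which do not change the approach.
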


\fi

\section{Convex hull in $\R^2$}
\label{sec:convex}
\ifFull
It is well-known that identifying the vertices of a convex hull of a set of $n$ points
has a lower bound of $\Omega(n\log n)$ in the algebraic computation tree model~\cite{Ben-Or:1983} in the worst case. 
 As with the skyline problem, 
However, 
the set of points on the convex hull will often be smaller than the total set of points. For this reason, an output sensitive algorithm was designed by Chan~\cite{c-osrch-96}, running in optimal $O(n \log h)$ time.
\else
%
\fi{}
\ifFull
Since there are $\Theta(n^2)$ unique windows, precomputing and storing the
convex hull for every window is prohibitively expensive in both time and
space.  Instead we preprocess a family of \emph{canonical subsets} of
events to achieve a more balanced space/time tradeoff of log-linear space and
polylogarithmic query time. 

On the surface, our strategy is related to the \emph{decomposition scheme} used to solve many decomposable geometry searching problems~\cite{ae-grs-97}. However, we build our decomposition over \emph{time} rather than the underlying geometry, and use sophisticated query algorithms to answer even those queries that are not traditionally considered decomposable, such as line stabbing.
\else
Like many computational geometry data structures, ours are based on a 
\emph{decomposition scheme}. We preprocess a family of \emph{canonical subsets} of
events to achieve a balanced space/time tradeoff of log-linear space and
polylogarithmic query time.  Although our choice of canonical subsets is
\emph{independent of the geometry} of the points, this approach yields
surprisingly natural query algorithms.  
We begin by presenting algorithms for convex hull queries that are still
decomposable, even though the decomposition is over time, including gift-wrapping
and linear programming queries. Then, through a novel combination of
sophisticated techniques, we adapt our approach to support line stabbing
queries.
\fi{}

In $\mathbf{R}^2$ computing the convex hull can be done by computing the upper hull and the symmetric problem of computing lower hull. So, when convenient we will only consider the computation of the upper hull.

\subsection{Hierarchical Decomposition}
\ifFull In our decomposition, we 
\else
We
\fi
build a balanced binary search tree $T$ over time with
a unique leaf for each temporal point (see Fig.~\ref{fig:decomp}). To each node $v \in T$ we associate a canonical subset $C_v$. If $v$ is a leaf, then $C_v = \{e_t\}$ where $e_t$ is the temporal point corresponding to $v$, otherwise $C_v$ is the union of its children's canonical subsets. We say that a node $v \in T$ \emph{covers} the temporal-point $e$ if $e \in C_v$.

We will assume that $T$ has been augmented with \emph{level-links}, pointers
between consecutive nodes at the same depth, and an array $A$ of the leaves
providing a mapping between temporal points and leaves. So, given any time
window $[p_i,p_j]$, we can find a set of $O(\log w)$ canonical sets which cover
$[p_i,p_j]$ in $O(\log w)$ time by working up in $T$ from the leaves at $A[i]$
and $A[j]$. Furthermore, at each node $v$, we store the convex hull of $C_v$ in
clockwise order beginning with the maximum point lexicographically, and we store the index of the point with the minimal coordinates lexicographically, providing access to the entire, upper, and lower hulls.

\ifFull
\begin{figure}[t]
\centering
\vspace{-1em}
\includegraphics[height=2in]{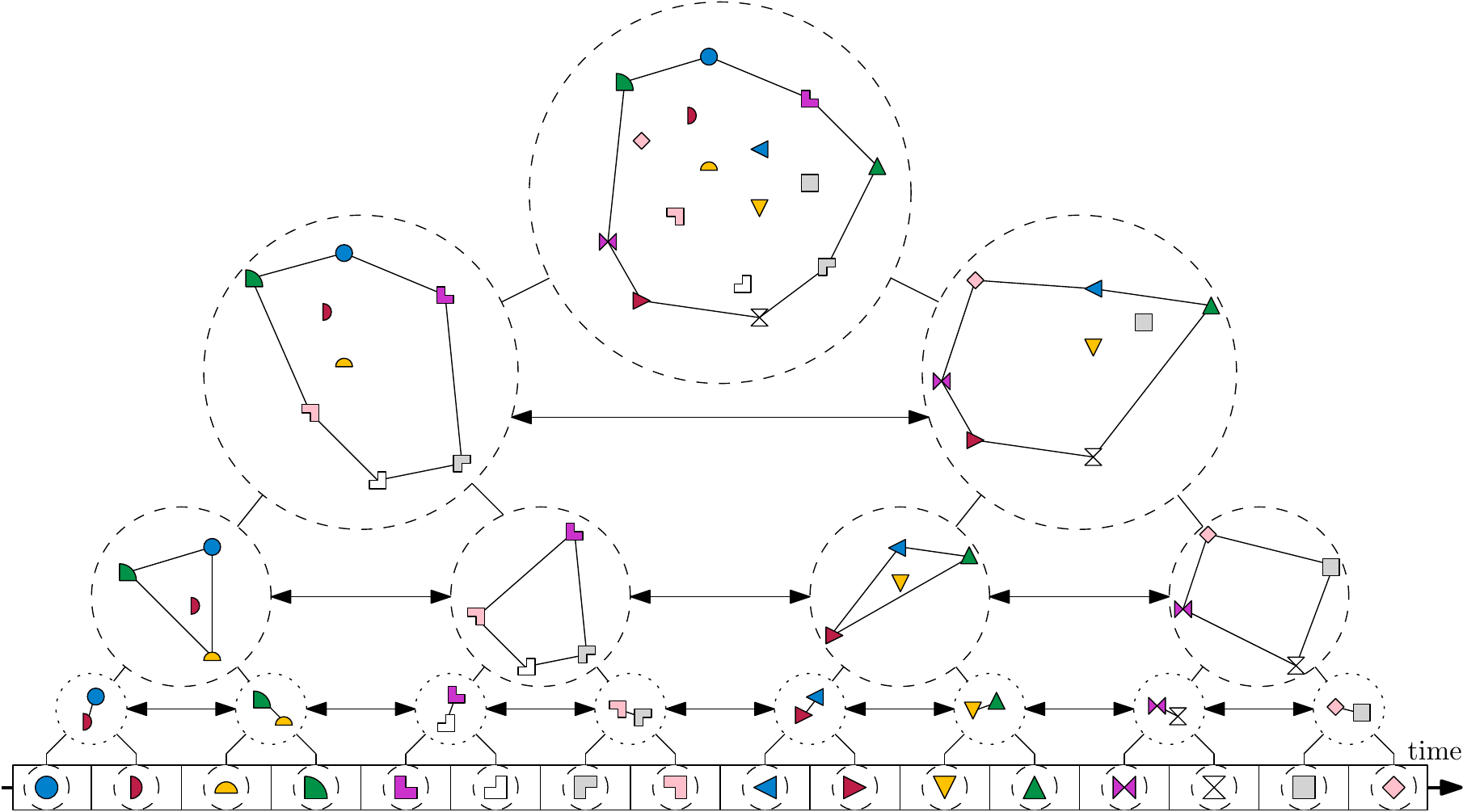}
\vspace{-1em}
\caption{
    \label{fig:decomp}
    Decomposition tree of temporal points. Each temporal point has a corresponding geometric point with coordinates in $\mathbf{R}^2$. Each node stores the convex hull of the points in its subtree.
}
\vspace{-1em}
\end{figure}
\fi

It is well known that we can find the convex
hull of a set of sorted points in linear time using the Graham
scan algorithm~\cite{graham-scan}. Therefore, we construct the tree bottom up from the
leaves. Each leaf contains a single point and no processing is required.
For each internal node $v$, we merge the sorted lists from the left and right child into
a single list for $C_v$ in $O(|C_v|)$ time. 
Then we construct the convex hull of $C_v$ also in linear time using Graham scan. 
Each point is stored in at most $O(\log n)$ nodes, and therefore the total space
required for $T$ is $O(n \log n)$. We construct each internal node in time
linear in the the number of leaves in its subtree, 
and thus the total time required to construct $T$ is $O(n \log n)$.
We summarize this result in the following lemma.

\begin{lemma}
\label{lem:decomp}
We can build a decomposition tree $T$ over $n$ events in $O(n \log n)$ time
using $O(n \log n)$ space such
that given a time window $[p_i,p_j]$, we can find $O(\log w)$ nodes of $T$ which cover $[p_i, p_j]$ in $O(\log w)$ time.
\end{lemma}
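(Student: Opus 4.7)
The plan is to assemble the lemma from three independent pieces: the tree construction (time and space bounds), the correctness of the covering scheme, and the query-time bound that depends on $w$ rather than $n$.

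First I would formalize the construction as described in the preceding paragraphs. Sort the events by time stamp and place them at the leaves of a balanced binary search tree $T$; compute each internal node's canonical subset $C_v$ as the union of its children's canonical subsets, stored as a list sorted lexicographically. Process the tree bottom-up: at each internal node $v$, merge the two sorted children lists in $O(|C_v|)$ time and run Graham scan to produce the clockwise convex hull of $C_v$, also in $O(|C_v|)$ time. Storing the hull as an array, together with a pointer to the lexicographically minimum vertex, gives $O(1)$-time access to the upper hull, lower hull, and whole hull. For the space and preprocessing bounds I would use the standard observation that $\sum_{v} |C_v| = O(n \log n)$, since each of the $n$ events lies in exactly one canonical set per level and the tree has $O(\log n)$ levels; hence both the total storage and the total construction work telescope to $O(n \log n)$.

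Next I would verify the covering claim. By construction, the canonical sets at any fixed level of $T$ partition the events; the collection of canonical sets across all levels forms a segment-tree-style family in which any contiguous range $[p_i,p_j]$ decomposes into a disjoint union of $O(\log w)$ canonical sets. The standard argument is that at each level at most a constant number of nodes fully inside the window are chosen, and the height of the minimal subtree containing both leaves $A[i]$ and $A[j]$ is $O(\log w)$ because that subtree has at most $O(w)$ leaves in a balanced tree.

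The main obstacle, and the step I would spend the most care on, is achieving query time $O(\log w)$ rather than the naive $O(\log n)$ that results from descending from the root. To beat $\log n$, I would not descend from the root at all. Instead, I would use the leaf array $A$ to jump directly to the two boundary leaves $A[i]$ and $A[j]$ in $O(1)$ time, then walk both pointers upward in lockstep using the level-links. At each level I check whether the two current nodes are equal (we have found the lowest common ancestor) or adjacent (their parent is the LCA); until then, I emit at each level the $O(1)$ canonical nodes lying strictly between the two ascending pointers, which are reachable in $O(1)$ time per emission via the level-links. The ascent halts after $O(\log w)$ steps because, as noted above, the LCA sits at depth $O(\log w)$ above the leaves. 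This yields $O(\log w)$ nodes identified in $O(\log w)$ total time and completes the lemma.
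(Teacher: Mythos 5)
Your overall route is the same as the paper's: a balanced tree over time with canonical subsets stored as sorted lists, built bottom-up by merging and Graham scan with $\sum_v |C_v| = O(n\log n)$, queried by jumping to the boundary leaves via the array $A$ and ascending with level-links. The construction and space/time analysis are correct, and the query algorithm you describe does achieve $O(\log w)$.

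The gap is in your justification of the $O(\log w)$ bound, which you yourself identify as the crux. The claim that ``the LCA sits at depth $O(\log w)$ above the leaves'' because the minimal subtree containing $A[i]$ and $A[j]$ has $O(w)$ leaves is false: the subtree rooted at the lowest common ancestor contains \emph{all} leaves below the LCA, not just the $w$ leaves of the window. If $i$ and $j$ index the two middle leaves of the tree, then $w=2$ but the LCA is the root, at depth $\Theta(\log n)$; the same example defeats the parenthetical ``adjacent implies their parent is the LCA.'' The correct termination argument never reaches the LCA: a node at height $k$ lying strictly between the two ascending pointers has all of its leaves strictly inside the window, and in a balanced tree such a node has $\Omega(c^k)$ leaves for some constant $c>1$, so after $O(\log w)$ levels no node lies strictly between the pointers and they are equal or adjacent. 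At that point every maximal canonical node has already been emitted --- each such node covers at most $w$ leaves and hence has height $O(\log w)$ --- so the ascent simply stops there. One further caution: at low levels there can be $\Theta(w/c^k)$ nodes strictly between the pointers, so you must emit only the at most two \emph{maximal} canonical nodes per level (the inward siblings of the two ascending pointers whose subtrees lie inside the window), not everything strictly between them. With these substitutions your argument is complete and coincides with the paper's.
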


Given a window $[p_i, p_j]$ we call the $O(\log w)$ sub-hulls covering $[p_i, p_j]$ 
as \emph{canonical sub-hulls}, and we call the set of canonical sub-hulls the
\emph{canonical cover} of the window.
For all the queries in this section,
we assume the canonical decomposition has been precomputed.

\subsection{Gift Wrapping}
A classic algorithm for computing the convex hull is gift wrapping, also known
as Jarvis's March~\cite{j-ichfs-73}.  This technique starts at a point $p_i$ on
the convex hull and through a comparison of the polar angles of the other points
with respect to $p_i$ as the center, selects the point $p_{i+1}$, such that all
other points are to the right of $p_{i+1}$.  If this search is done linearly,
the next point on the hull can be found in $O(n)$ time, and thus the entire hull
can be computed in $O(nh)$ time.  
Given a point $q$ on the complete convex hull of a window, clockwise or counterclockwise \emph{Gift Wrapping} queries, locating the clockwise or counterclockwise adjacent point on the hull, can be done more quickly using our hull decomposition.

\begin{theorem}
\label{thm:gift-wrap}
Time-windowed
gift wrapping queries on the convex hull of $[p_i, p_j]$ can be answered in $O(\log^2 w)$ time.
\end{theorem}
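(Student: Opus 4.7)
The plan is to exploit the canonical cover from Lemma~\ref{lem:decomp}: given the query point $q$ on the full hull of $[p_i,p_j]$, find the ``next'' vertex of each canonical sub-hull separately, then combine these $O(\log w)$ candidates by picking the one with the smallest polar angle (for clockwise) around $q$. Since the merger is $O(\log w)$ comparisons, the whole query is dominated by the cost of the $O(\log w)$ per-sub-hull subqueries, so it suffices to show each takes $O(\log w)$ time.

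First I would retrieve the canonical cover of the window in $O(\log w)$ time, producing a list of nodes $v_1,\dots,v_m$ with $m=O(\log w)$ and their precomputed convex hulls $H_1,\dots,H_m$, each stored in clockwise order starting from the lexicographic maximum. Next, for each $H_\ell$ I would perform a \emph{tangent query} from $q$: find the vertex $u_\ell\in H_\ell$ such that every other vertex of $H_\ell$ lies (non-strictly) to the right of the directed line $qu_\ell$. Because each $H_\ell$ is stored as a sorted cyclic array, this is the standard tangent-from-an-external-point problem on a convex polygon and can be solved by binary search in $O(\log|H_\ell|)=O(\log w)$ time using the usual turn-direction tests on midpoint vertices. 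After obtaining $u_1,\dots,u_m$, I would sweep through them and return the one that makes the smallest clockwise turn at $q$ (equivalently, the one whose directed line $q u_\ell$ has every other $u_{\ell'}$ weakly to the right), which is exactly the counterclockwise-adjacent vertex of $q$ on the full hull by a standard exchange argument, since the union of the $H_\ell$ contains all points in the window.

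The subtle point, which I expect to be the main obstacle, is handling the case where $q$ itself is a vertex of some $H_\ell$ rather than strictly outside it. In that situation ``tangent from $q$'' is really ``neighbor of $q$ in the direction we are wrapping,'' which can be read off in $O(\log w)$ time by locating $q$ in the sorted hull of $H_\ell$ (with the array $A$ and level-links giving the index in $O(\log w)$) and taking the adjacent vertex; for hulls $H_\ell$ not containing $q$, the ordinary external tangent algorithm applies. A small case analysis is also needed when $q$ lies on an edge of $H_\ell$ (collinear with two hull vertices), which is resolved by breaking ties consistently in favor of the vertex farther along the wrapping direction.

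Putting the pieces together, the query performs $O(\log w)$ binary searches of cost $O(\log w)$ each, plus an $O(\log w)$ final comparison sweep, for total time $O(\log^2 w)$, establishing Theorem~\ref{thm:gift-wrap}. The symmetric counterclockwise query is identical up to orientation.
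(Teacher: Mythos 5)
Your proposal matches the paper's proof in all essentials: compute the canonical cover, binary-search each of the $O(\log w)$ sub-hulls for the tangent (extremal-angle) vertex from $q$, and select the best of the resulting candidates, for $O(\log^2 w)$ total. The extra care you take with the degenerate case where $q$ is itself a vertex of some sub-hull is a reasonable refinement the paper glosses over, but it does not change the approach.
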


\ifFull
\begin{figure}[t]
\vspace{-1em}
\begin{minipage}[t]{0.475\linewidth}
\centering
\includegraphics[height=1.25in]{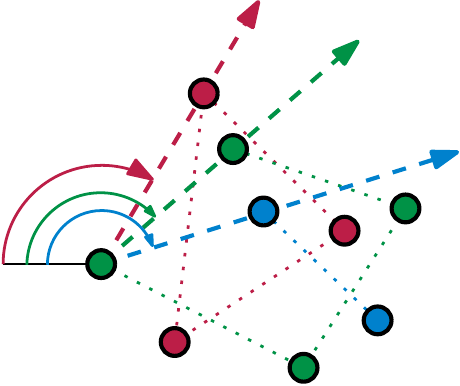}
\vspace{-1em}
\caption{
    \label{fig:giftwrap}
    A decomposed gift wrapping query.
}
\end{minipage}
\hfill
\begin{minipage}[t]{0.475\linewidth}
\centering
\includegraphics[height=1.25in]{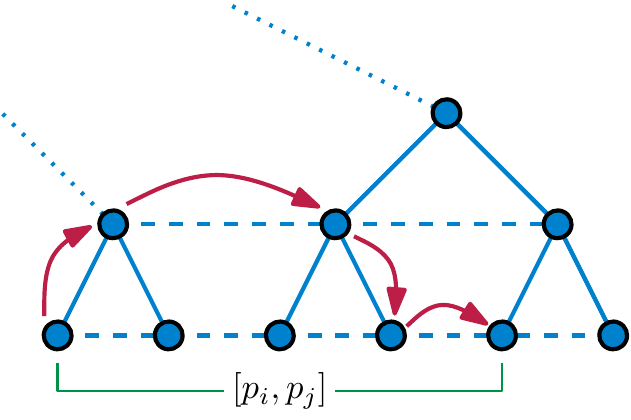}
\vspace{-1em}
\caption{
    \label{fig:window-walking}
    Walking through the window
}
\end{minipage}
\vspace{-1em}
\end{figure}
\fi

\begin{cor}\label{cor:hull-wrap}
The convex hull of $[p_i, p_j]$ can be computed in $O(h\log^2 w)$.
\end{cor}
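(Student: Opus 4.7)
The plan is to reduce the entire hull computation to a sequence of gift wrapping queries, exploiting Theorem~\ref{thm:gift-wrap} as a black box. The idea is the standard Jarvis's March: find one guaranteed point on the convex hull of $[p_i, p_j]$, then walk around the hull one vertex at a time, invoking a gift wrapping query at each step.

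First I would locate a starting point that is provably on the convex hull of $[p_i,p_j]$. The natural choice is the lexicographically minimum point in the window, because extremal points in any direction always lie on the hull. Using Lemma~\ref{lem:decomp} we extract the $O(\log w)$ canonical sub-hulls covering the window; each such node already stores the index of its lexicographically minimum point, so we can scan these $O(\log w)$ candidates and return the overall minimum in $O(\log w)$ time. This gives a valid seed vertex $q_0$ on the complete hull of the window.

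Next I would iterate gift wrapping. Starting from $q_0$, each call to the procedure of Theorem~\ref{thm:gift-wrap} produces the clockwise neighbor $q_{k+1}$ of the current vertex $q_k$ in $O(\log^2 w)$ time. Repeating this until we return to $q_0$ traces out all $h$ vertices of the hull in clockwise order, for a total cost of $O(h\log^2 w)$ time, which dominates the $O(\log w)$ spent to find the seed. Correctness and termination follow from the correctness of a single gift wrapping step: because every $q_k$ produced lies on the hull, the walk is exactly the boundary cycle and closes after $h$ steps.

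The only subtlety, rather than a real obstacle, is making sure the seed vertex actually lies on the global hull of the window (not merely on one canonical sub-hull): choosing a direction-extremal point such as the lexicographic minimum guarantees this, since no other point in $[p_i,p_j]$ can dominate it in that direction. With that, the bound $O(h\log^2 w)$ is immediate.
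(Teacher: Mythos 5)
Your proof is correct and follows essentially the same route as the paper's: seed with a direction-extremal point found in $O(\log w)$ time from the canonical cover, then repeat gift wrapping queries (Theorem~\ref{thm:gift-wrap}) around the hull, one per hull vertex, for $O(h\log^2 w)$ total. Your added justification for why the seed lies on the global hull is a reasonable elaboration but not a departure from the paper's argument.
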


This technique can be used to answer tangent queries as well, where a \emph{tangent query} reports the two tangents of the hull passing through a query point $q$ or an exception if $q$ is in the hull.

\begin{cor}\label{cor:tangent}
Tangent queries on the convex hull of $[p_i, p_j]$ can be answered in  $O(\log^2 w)$ time.
\end{cor}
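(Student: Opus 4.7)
The plan is to reduce the tangent query to $O(\log w)$ per-sub-hull tangent queries. Using Lemma~\ref{lem:decomp}, I first retrieve the canonical cover $H_1,\dots,H_k$ of $[p_i,p_j]$ with $k=O(\log w)$ in $O(\log w)$ time. Each $H_\ell$ is stored as a clockwise vertex sequence, so the classical $O(\log |H_\ell|)$ binary-search routine returns, in $O(\log w)$ time, either a certificate that $q \in H_\ell$ or a pair of tangent points $L_\ell, R_\ell$ of $H_\ell$ from $q$. If any sub-hull certifies $q \in H_\ell$, then $q$ lies in the overall hull and we return the exception immediately.

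Assuming $q \notin H_\ell$ for every $\ell$, every vertex of the union's convex hull is a vertex of some sub-hull, and in particular the two true tangent points from $q$ must appear among the $2k$ candidates $\{L_\ell, R_\ell\}$. To extract them, I order the candidates by the angular direction of the ray $\overrightarrow{qt}$ and keep the two extremal ones, i.e.\ the most counterclockwise and the most clockwise as seen from $q$; this is a linear scan over $O(\log w)$ candidates, and it mirrors the extremal selection used in the proof of Theorem~\ref{thm:gift-wrap}.

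The main obstacle is the subtle case where $q$ is outside every sub-hull but inside the union's convex hull, because then the $2k$ candidate rays fan out over more than $180^\circ$ and no pair is a valid supporting tangent pair. I resolve this with a verification step: after selecting the two extremal candidates $t_L$ and $t_R$, I check that every other candidate lies in the closed wedge at $q$ bounded by the rays $\overrightarrow{qt_L}$ and $\overrightarrow{qt_R}$ on the hull's side; if any candidate escapes this wedge, the tangent cone from $q$ covers more than $180^\circ$ and $q$ must be interior to the overall hull, so we return the exception. The verification is an $O(\log w)$ pass over the candidates, and the final tangent pair is exactly $(t_L,t_R)$. Summing $O(\log w)$ sub-hull tangent computations at $O(\log w)$ each, plus the $O(\log w)$ combining and verification work, yields the claimed $O(\log^2 w)$ bound.
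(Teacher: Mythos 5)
Your proof is correct, and its core---computing the canonical cover, running an $O(\log w)$ binary search in each sub-hull for its tangent (angularly extreme) points, and combining the $O(\log w)$ candidates by angular extremal selection---is essentially what the paper does: the paper answers a tangent query by pretending $q$ lies on the hull and issuing the clockwise and counterclockwise gift-wrapping queries of Theorem~\ref{thm:gift-wrap} from $q$, which amounts to the same per-sub-hull binary searches and the same selection of the overall angular extremes. Where you genuinely differ is the handling of the in-hull case. The paper first runs an explicit containment query (Corollary~\ref{cor:containment}) costing $O(\log^2 w)$ and resting on the line-stabbing machinery of Lemmas~\ref{lem:find-adjacent} and~\ref{lem:count-between}; you instead detect containment from the candidates themselves, using the fact that $q$ is outside the overall hull if and only if the $2k$ candidate directions fit in an open half-plane through $q$ (each sub-hull lies in the convex cone spanned by its two tangent directions, so if the candidate directions fit in an open half-plane then so do all points; conversely, if $q$ is in the overall hull the candidate directions positively span the plane and no such half-plane exists). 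This verification costs only $O(\log w)$ rather than $O(\log^2 w)$, and, more importantly, it removes the forward dependency of Corollary~\ref{cor:tangent} on Corollary~\ref{cor:containment} that the paper's write-up incurs. One small point to tighten: ``select the two angular extremes, then verify the wedge'' should be stated as the half-plane (maximum angular gap exceeds $180^{\circ}$) test, since ``extremes'' are not well defined when the candidate directions span more than $180^{\circ}$; as long as the wedge in your verification step is understood to be the convex one, the test fails exactly in that case and your argument goes through.
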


We answer a tangent query via an iterative search; we perform
a binary search for the tangent in each sub-hull. Thus, at first it seems that
our query time can easily be sped up by a logarithmic factor via standard
fractional cascading techniques. However, the answer to a tangent query in one
sub-hull may not give us enough information about the answer to a tangent query in
another sub-hull. Recall that a convex hull partitions the plane into the region
inside the hull, and a set of wedges outside the hull,
where each wedge corresponds to the set of query points
which will all return the same convex hull point as the answer to a tangent
query 
(see Fig.~\ref{fig:hull-wedge-grid}). 
Note that by moving the query point, we can maintain the same answer to a
tangent query on one sub-hull while dramatically changing the answer to the
query on other sub-hulls. Thus, there is no clear strategy on how to preprocess
the hulls in order to leverage fractional cascading and speed up iterative
tangent queries for arbitrary query points.

\begin{figure*}
\begin{minipage}[t]{0.3\linewidth}
\centering
\vspace{-1em}
\includegraphics[height=0.9in]{gift-wrapping}
\vspace{-1em}
\caption{Gift wrapping.}
\label{fig:giftwrap}
\end{minipage}
\hfill
\begin{minipage}[t]{0.35\linewidth}
\centering
\vspace{-1em}
\includegraphics[height=0.9in]{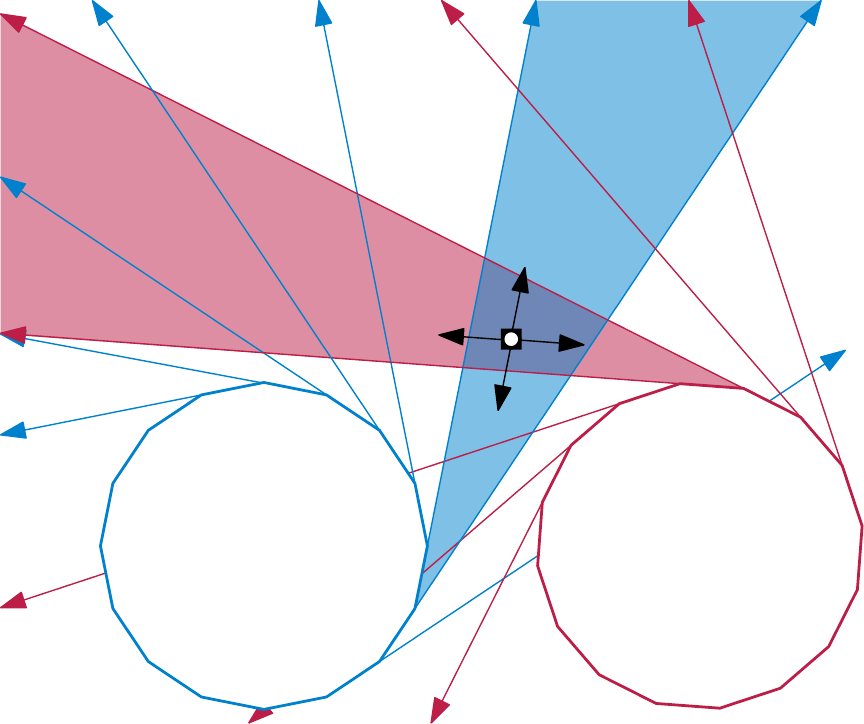}
\vspace{-1em}
\caption{Tangent query example.}
\label{fig:hull-wedge-grid}
\end{minipage}
\hfill
\begin{minipage}[t]{0.3\linewidth}
\centering
\vspace{-1em}
\includegraphics[height=0.9in]{window-walking}
\vspace{-1em}
\caption{Fractional cascading.}
\label{fig:window-walking}
\end{minipage}
\end{figure*}

\subsection{Linear Programming}
In a \emph{Linear Programming} query, we are given a direction and would like to find a point on the complete convex hull furthest in the queried direction.
\begin{theorem}\label{thm:linear-prog}
Time-windowed
linear programming queries on the convex hull of $[p_i, p_j]$ can be answered in  $O(\log w)$ time.
\end{theorem}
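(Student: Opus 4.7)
The plan is to augment the decomposition tree of Lemma~\ref{lem:decomp} with a fractional cascading catalog so that, after one $O(\log w)$ bootstrap search, each of the $O(\log w)$ canonical sub-hulls contributes its direction-$d$ extreme vertex in $O(1)$ amortized time. On a single convex hull of size $h$ stored in rotational order, the vertex maximizing the dot product with a query direction $d$ is found by binary searching the sorted cyclic list of outward edge normals in $O(\log h)$ time. Doing this independently on every canonical sub-hull would cost $O(\log^2 w)$, but unlike the tangent queries of Corollary~\ref{cor:tangent} the answer here depends only on $d$ and not on any query-point location, so we are really searching the \emph{same} value $d$ in $O(\log w)$ different sorted lists --- precisely the situation fractional cascading is designed to accelerate.

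First I would augment every node $v$ of the decomposition tree with the sorted angular list of outward edge normals of $\CH(C_v)$, annotated with back-pointers to the corresponding hull vertex. I would then apply the Chazelle--Guibas fractional cascading construction, treating the parent-child edges of the decomposition tree as the adjacency graph. Since this graph has degree three, every list is inflated by only a constant factor, so preprocessing time and space remain $O(n\log n)$, matching Lemma~\ref{lem:decomp}. The catalog guarantees that given a pointer to the position of $d$ in the sorted list of any node, the position of $d$ in the list of a parent or child can be obtained in $O(1)$ time.

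To answer a query in direction $d$ on window $[p_i,p_j]$, I would use Lemma~\ref{lem:decomp} to compute the canonical cover in $O(\log w)$ time. These $O(\log w)$ cover nodes form a connected subtree of the decomposition tree of diameter $O(\log w)$, arising as two upward walks from the leaves $A[i]$ and $A[j]$ to the point where those walks meet. Picking any single cover node, I would binary search its normals list for $d$ in $O(\log w)$ time to seed an initial cascading pointer; then I would visit the remaining cover nodes by hopping parent-child edges in the catalog, advancing the pointer in $O(1)$ per hop to read off each sub-hull's direction-$d$ extreme vertex. A final pass over the $O(\log w)$ candidates returns the globally extreme point.

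The main subtlety I expect is verifying that the bootstrap binary search is bounded by $\log w$ rather than $\log n$, and that the canonical cover really is traversable as a constant-degree connected subgraph of the cascading catalog. Both follow from the level-linked construction of Lemma~\ref{lem:decomp}: every canonical cover node has subtree size $O(w)$, so its stored hull and hence its normals list has size $O(w)$, bounding the initial search by $O(\log w)$; and the cover is by construction a pair of root-to-leaf paths in the tree, which lies entirely within the parent-child adjacency used for cascading. Summing the pieces gives $O(\log w)$ query time overall.
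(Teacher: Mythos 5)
Your overall strategy is the same as the paper's: store each sub-hull's vertices (equivalently, its sorted cyclic list of edge normals) as a catalog at each decomposition node, fractionally cascade over the constant-degree catalog graph, pay one $O(\log w)$ bootstrap search, and then harvest each canonical sub-hull's extreme vertex in $O(1)$ per node. The observation that this works for linear programming but not for tangent queries (because the search key $d$ is the same in every list) is also the paper's.

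There is, however, one concrete gap: you build the catalog graph on \emph{parent-child edges only} and assert that the canonical cover is a connected subtree of diameter $O(\log w)$ because the two upward walks from $A[i]$ and $A[j]$ ``meet.'' They meet at the lowest common ancestor, and for a narrow window that straddles a high split of the tree (e.g., $[p_{n/2-1}, p_{n/2}]$) the LCA is the root, so connecting the left half of the cover to the right half through parent-child edges alone costs $\Theta(\log n)$ cascading hops, not $O(\log w)$. This is exactly the issue the paper's proof flags (``this query does not take $O(\log n)$ time because \ldots{} we can navigate \ldots{} without routing outside of the window space''), and its fix is to include the \emph{level-links} of Lemma~\ref{lem:decomp} in the catalog graph: the node of height $\lceil \log w\rceil$ containing $A[i]$ and the one containing $A[j]$ are consecutive at their level and hence level-linked, so the entire cover is reachable in $O(\log w)$ hops while the graph degree rises only from three to five, preserving the linear-size cascading guarantee. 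With that one modification --- which your construction does not currently include --- your argument goes through and coincides with the paper's.
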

Additionally, we can answer the \emph{Line Decision} problem, determining if a line intersects the hull.
\begin{cor}\label{cor:decision}
Line decision queries on the convex hull of $[p_i, p_j]$ can be answered in  $O(\log w)$ time.
\end{cor}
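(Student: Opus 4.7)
The plan is to reduce a line decision query to two linear programming queries and exploit Theorem~\ref{thm:linear-prog}. The key geometric observation is that a convex body $H$ is intersected by a line $\ell$ if and only if $H$ is not contained strictly on one side of $\ell$; equivalently, if we let $d$ be a unit vector normal to $\ell$, then $\ell$ meets $H$ exactly when the extreme vertex of $H$ in direction $d$ and the extreme vertex in direction $-d$ lie on opposite (closed) half-planes determined by $\ell$.

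Given this, the algorithm is immediate. First I would compute a normal $d$ of the query line $\ell$ in $O(1)$ time. Next I would issue a linear programming query on the convex hull of $[p_i, p_j]$ in direction $d$ to obtain the vertex $p^+$, and a second linear programming query in direction $-d$ to obtain $p^-$. By Theorem~\ref{thm:linear-prog} each of these queries takes $O(\log w)$ time. Finally I would evaluate the signed distances of $p^+$ and $p^-$ to $\ell$ in $O(1)$ time; the line intersects the hull precisely when the two signed distances have opposite signs, or when at least one of them is zero. The total running time is $O(\log w)$.

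Correctness follows because the pair $(p^+, p^-)$ realizes the maximum and minimum values of the linear functional $x \mapsto d \cdot x$ over the hull, and since a line is a level set of this functional, any other hull point (in fact the entire hull) lies in the slab of functional values bounded by $d \cdot p^-$ and $d \cdot p^+$. The hull therefore crosses $\ell$ iff the level corresponding to $\ell$ lies within this closed interval, which is exactly the sign condition checked above. There is no real obstacle here beyond being careful about the boundary case where $\ell$ is tangent to the hull at a single vertex, which is correctly handled by allowing equality in the sign test.
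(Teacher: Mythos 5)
Your proposal is correct and matches the paper's proof exactly: both reduce the line decision query to two extremal point (linear programming) queries in the directions perpendicular to the line and test whether the resulting points are separated by the line. Your write-up just spells out the sign test and the tangency boundary case in more detail.
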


\subsection{Line Stabbing}
For the \emph{Line Stabbing} query, a query line, $Q$, is given and we seek the
edges of the completed convex hull, if any, that intersect the line. Without
loss of generality we will consider the problem of directed line stabbing, i.e.,
we impose a direction on the query line and return the intersected edge furthest
in that direction. We observe that computing convex hull of all pairs of canonical sub-hulls is too inefficient;  since they are not separated in space, they may require a linear number of bridge 
\ifFull facets (see Fig.~\ref{fig:bad-hull-merge}).
\else
facets.
\fi

\begin{figure*}[t]
\begin{minipage}[t]{0.3\linewidth}
\centering
\vspace{-1em}
\includegraphics[height=1.35in]{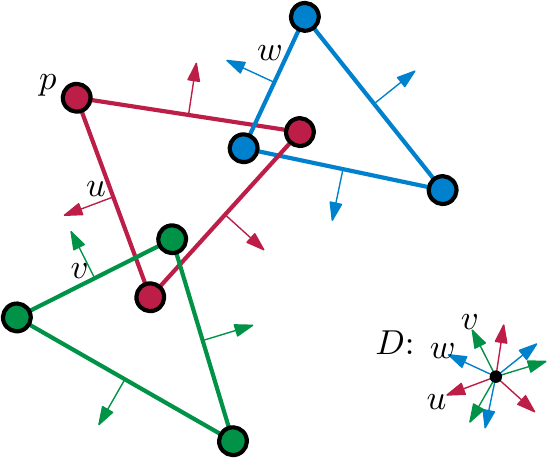}
\vspace{-1em}
\caption{
\label{fig:bet-adj}
The point $p$ is extremal for $w$ which is between $u$ and $v$.
}
\end{minipage}
\hfill
\begin{minipage}[t]{0.3\linewidth}
\centering
\vspace{-1em}
\includegraphics[height=1.35in]{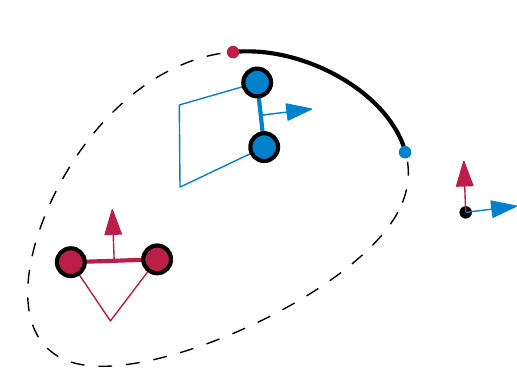}
\vspace{-1em}
\caption{The solid region of the complete hull contains the points between the two vector.}
\label{fig:bet-ext}
\end{minipage}
\hfill
\begin{minipage}[t]{0.3\linewidth}
\centering
\vspace{-1em}
\includegraphics[height=1.25in]{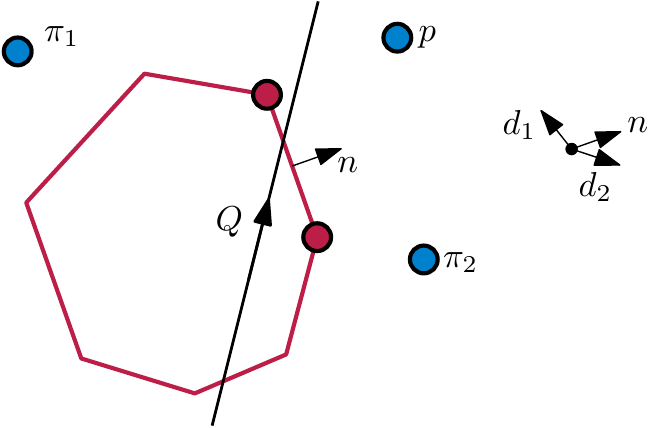}
\vspace{-1em}
\caption{The facet normal $n$ is queried and found to improve the right side bound.}
\label{fig:line-stabbing}
\end{minipage}
\end{figure*}

First, we need to define some additional terminology.  In
Figure~\ref{fig:bet-adj}, we have a few convex hulls and their facet normal
vectors.  The circular list $D$ consists of all of the normals in the sub-hulls
sorted in clockwise order.  The vector $w$ is \emph{between} $u$ and $v$ because
$w$ falls between them in the sorted order.  The point $p$ is \emph{between} the
two vectors $u$ and $v$ on a convex hull because it is extremal for a vector
between $u$ and $v$, namely $w$ or $v$ itself (see Fig.~\ref{fig:bet-ext}).
Finally, $v$ and $w$ are \emph{adjacent} vectors in $D$ if they are consecutive
in the angular order.

\begin{lemma}\label{lem:find-adjacent}
    If $Q$ intersects the complete hull, then in $O(\log^2 w)$ time we can find adjacent normals $n_1$ and $n_2$ in $D$ such that the intersected facet is between $n_1$ and $n_2$.
\end{lemma}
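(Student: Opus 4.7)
My plan is to binary search on the cyclic list $D$ of sub-hull facet normals for the adjacent pair $(n_1,n_2)$ that sandwiches the normal $n^*$ of the target facet $e^*$. Since $|D| = O(w)$, the search uses $O(\log w)$ rounds, each implemented in $O(\log w)$ time via a linear programming query on the complete hull, giving the claimed $O(\log^2 w)$ bound.

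\textbf{Implementing the search.} I never materialize $D$ explicitly; it is represented implicitly as the sorted union of the $O(\log w)$ per-sub-hull normal lists, which are already in sorted clockwise order inside each canonical sub-hull. In each round the current arc of $D$ is maintained by a pair of angular endpoints, and the pivot normal $n$ is chosen (e.g., by taking the median of the $O(\log w)$ sub-hull medians lying inside the arc) in $O(\log w)$ time so that the arc shrinks by a constant factor per round.

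\textbf{Turning a vertex answer into a search direction.} Given the pivot $n$, I apply Theorem~\ref{thm:linear-prog} in $O(\log w)$ time to obtain the extreme vertex $v(n)$ of the complete hull in direction $n$; the two complete-hull edges incident to $v(n)$ are the consecutive complete-hull normals that bracket $n$. The directed query line $Q$ splits the hull's boundary into two polygonal chains joined at the two hull--$Q$ intersection points, and the target edge $e^*$ lies on the exit chain. By comparing the sidedness of $v(n)$ with respect to $Q$ and determining whether $v(n)$ falls on the exit chain or the entry chain, I decide whether $n^*$ lies clockwise or counter-clockwise of $n$ in $D$ and recurse on the corresponding half-arc.

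\textbf{Main obstacle.} The principal difficulty is the geometric case analysis in the previous step: an LP query returns a vertex rather than an edge, so I must argue using the two complete-hull edges incident to $v(n)$ together with the position of $v(n)$ relative to $Q$ that exactly one half of the current arc can still contain $n^*$, yielding a strict halving in every round. Degenerate cases---for instance, $v(n)$ lying on $Q$, or $n$ coinciding with an incident complete-hull normal---are dispatched by standard tie-breaking on the cyclic order. After $O(\log w)$ halvings the arc collapses to a single consecutive pair of normals in $D$, which is the desired $(n_1,n_2)$, establishing the lemma.
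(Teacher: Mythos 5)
Your overall strategy is the same as the paper's: maintain a shrinking arc of candidate normals initialized by the two perpendiculars of $Q$, use a linear programming query (Theorem~\ref{thm:linear-prog}) on a pivot normal $n$ to obtain the extreme vertex, test which side of $Q$ it lies on, and recurse on the surviving half, terminating when the arc collapses to an adjacent pair in $D$. Your decision rule (sidedness of $v(n)$ with respect to the directed line $Q$) is exactly the paper's invariant that the crossed facet remains between the two maintained directions, and your implicit representation of $D$ as the $O(\log w)$ sorted per-sub-hull normal lists is also what the paper uses.

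The one genuine problem is your pivot rule. You propose the unweighted median of the $O(\log w)$ sub-hull medians and assert that the arc shrinks by a constant factor per round, but that does not follow: if one sub-hull contributes almost all of the remaining normals while the others contribute a handful each, the unweighted median of medians can land inside a tiny list, and a round may discard only $O(\log w)$ normals rather than a constant fraction of the total, blowing the round count up to nearly linear and the overall time to $\omega(\log^2 w)$. The paper avoids this by taking the \emph{weighted} median of the medians, with each sub-hull's list weighted by the number of its normals still inside the current arc; then whichever side the LP query eliminates, at least a quarter of the total remaining weight is discarded, so $O(\log w)$ rounds suffice and the $O(\log^2 w)$ bound follows. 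With that fix your argument coincides with the paper's.
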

\begin{proof}
Let $d_1$ and $d_2$ be the left and right perpendicular directions of $Q$, respectively. Then set $\pi_i$ to be the extremal point on the complete hull in the direction $d_i$ for $i=1,2$. Then we iterate through the canonical cover, considering each canonical sub-hull. Within each sub-hull we iterate through its facet normal vectors $n$. If $n$ is between $d_1$ and $d_2$, we compute, $p$, the extremal point in the direction $n$. If $p$ is to the left of $Q$, then we set $\pi_1 = p$ and $d_1 = n$, otherwise we set $\pi_2 = p$ and $d_2 = n$
(see Fig.~\ref{fig:line-stabbing}). 
After processing all of the facet normals in all of the canonical sub-hulls $d_1$ and $d_2$ will be adjacent vectors.  
Since we have maintained the invariant that the facet crossing $Q$ is between them, $d_1$ and $d_2$ are the desired normals.
The running time of this algorithm is dominated by the $w$ extremal point queries, each taking time $O(\log w)$ time.  However we can speed up the number of linear programming queries by using weighted median selection driven prune and search. To start the search for each convex hull compute the two normals closest to the perpendiculars of $Q$. The number of vectors between these two normals will be the weight for each list and the two normals will dictate the left and right ends of the lists.  Then by choosing the weighted median of the medians, a single linear programming query can eliminate a quarter of the remaining weight.  Calculating the weighted median takes $O(\log w)$ time and querying the median takes $O(\log w)$ time.  Because the hull starts with less than or equal to $w$ total weight, it takes $O(\log w)$ queries to find the adjacent vectors.  This gives a total runtime of $O(\log^2 w)$.
\end{proof}

\ifFull
\begin{figure}[t]
\vspace{-1em}
\begin{minipage}[t]{0.475\linewidth}
\centering
\includegraphics[height=1.25in]{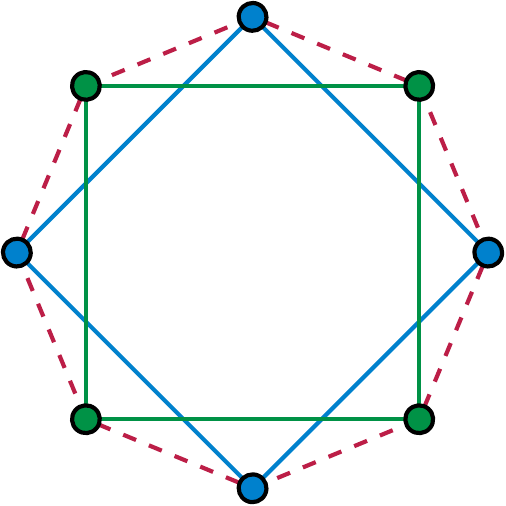}
\vspace{-1em}
\caption{Merging two canonical sub-hull may require the addition of a linear number of facets.
}\label{fig:bad-hull-merge}
\end{minipage}
\hfill
\begin{minipage}[t]{0.475\linewidth}
\centering
\includegraphics[height=1.25in]{line-stabbing}
\vspace{-1em}
\caption{The normal $n$ of an edge on a sub-hull is queried and found to improve the right side bound.}
\label{fig:line-stabbing}
\end{minipage}
\vspace{-1em}
\end{figure}
\fi

\begin{lemma}\label{lem:count-between}
For two edge normals, $u$ and $v$, that are adjacent in $D$, there are at most $3 \log w$ points between $u$ and $v$ on the sub-hulls.
\end{lemma}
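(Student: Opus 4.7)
The plan is to combine two ingredients: the size bound on the canonical cover from Lemma~\ref{lem:decomp}, and a per-sub-hull bound of a small constant number of ``between'' points, obtained from the adjacency of $u$ and $v$ in the global normal list $D$. Concretely, the canonical cover of $[p_i,p_j]$ contains $O(\log w)$ sub-hulls by the standard argument that each level of the balanced tree $T$ contributes at most a constant number of covering nodes, and only the $O(\log w)$ levels whose subtree sizes are at most $w$ can contribute. So it suffices to prove that in each sub-hull $H$ in the cover, only $O(1)$ vertices are between $u$ and $v$ on $H$; summing then yields a bound of the form claimed, which after accounting for constants is $3 \log w$.

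To bound the contribution of a single sub-hull $H$, I would use the standard fact that every vertex $p$ of a convex hull is extremal for precisely the closed arc of directions bounded by the two edge normals incident to $p$, and that these wedges partition the direction circle. The key structural observation is that, because $u$ and $v$ are consecutive in $D$, \emph{no} edge normal of any sub-hull lies strictly in the open arc $(u,v)$; in particular, the edge normals of $H$ that lie in the closed arc $[u,v]$ are contained in $\{u,v\}$. Therefore at most a constant number of $H$'s wedges can intersect $[u,v]$: exactly one wedge ``straddles'' $(u,v)$, and at most two additional wedges can touch $[u,v]$ at its endpoints when $u$ or $v$ happens to be an edge normal of $H$. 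Each intersecting wedge corresponds to one vertex of $H$, so at most a constant number of vertices of $H$ are between $u$ and $v$.

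Multiplying the $O(\log w)$ sub-hulls in the canonical cover by this constant gives the bound of $3\log w$. The main obstacle is pinning down the per-sub-hull constant precisely and reconciling it with the interpretation of ``between'' implied by the preceding text (which counts a point extremal for $v$ itself but distinguishes it from the $u$-endpoint case). I would handle this with a brief case analysis on whether neither, exactly one, or both of $u$ and $v$ are edge normals of $H$: in each case, the vertices whose wedges meet the relevant arc are the single straddling vertex, possibly together with one extra vertex on each of the $u$-edge and $v$-edge of $H$ when present. Since the analysis reduces to arithmetic over a fixed number of cases, the constants line up to give the stated $3\log w$ bound.
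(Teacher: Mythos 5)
Your proposal is correct and follows essentially the same route as the paper: a per-sub-hull bound of three ``between'' vertices, obtained from the fact that adjacency of $u$ and $v$ in $D$ forbids any sub-hull edge normal strictly inside the arc $(u,v)$, multiplied by the $O(\log w)$ sub-hulls in the canonical cover. The paper phrases the per-sub-hull step as a contradiction (four points would force three edges whose normals lie in $[u,v]$, one of them strictly between), while you argue it directly via the wedge decomposition, but the counting is identical.
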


\begin{theorem}
Time-windowed
line stabbing queries on the convex hull of $[p_i, p_j]$ can be answered in $O(\log^2 w)$ time.
\end{theorem}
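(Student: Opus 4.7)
The plan is to combine Lemmas \ref{lem:find-adjacent} and \ref{lem:count-between} to reduce the stabbing query to an examination of only $O(\log w)$ candidate points, which can then be handled in $O(\log w)$ additional time. The high-level strategy is: (i) decide whether $Q$ even intersects the complete hull of $[p_i,p_j]$; (ii) localize the stabbed facet to a narrow angular slab of the circular normal list $D$; (iii) collect all sub-hull vertices lying inside that slab; and (iv) compute the stabbed edge from these collected vertices directly.

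First I would handle the easy disjoint case. Using Theorem~\ref{thm:linear-prog}, perform two linear-programming queries in the two perpendicular directions to $Q$; if both extremal points lie on the same side of $Q$ the line misses the hull and we return nothing, in $O(\log w)$ time. Otherwise $Q$ crosses the hull exactly twice, and it suffices to find one of the crossings (directed line stabbing), so we fix the convention of the preceding subsection and invoke Lemma~\ref{lem:find-adjacent} to obtain in $O(\log^2 w)$ time a pair of normals $n_1, n_2$ that are adjacent in $D$ and bracket the stabbed facet.

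The key geometric observation is that the stabbed facet of the complete hull must have both endpoints among the sub-hull vertices whose incident facet normals fall strictly between $n_1$ and $n_2$ in $D$. By Lemma~\ref{lem:count-between}, there are at most $3\log w$ such vertices. I would extract them by walking each canonical sub-hull: on a sub-hull, the facet normals appear in angular order, so two binary searches of $O(\log w)$ time each suffice to locate the contiguous run of vertices whose neighboring normals lie in the $(n_1, n_2)$ arc. Doing this for each of the $O(\log w)$ canonical sub-hulls gives a candidate set $P$ of size $O(\log w)$ in $O(\log^2 w)$ time.

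Finally, compute the convex hull of $P$ (which locally agrees with the complete hull in the angular slab between $n_1$ and $n_2$) in $O(\log w \log \log w)$ time, and scan its $O(\log w)$ facets to find the one crossed by $Q$. The total running time is dominated by the $O(\log^2 w)$ cost of Lemma~\ref{lem:find-adjacent} and of the $O(\log w)$ binary searches on canonical sub-hulls, yielding the claimed $O(\log^2 w)$ bound. The main obstacle I foresee is verifying the geometric claim that the portion of the complete hull between two $D$-adjacent normals is realized entirely by vertices of the sub-hulls that are \emph{between} those normals in the sense of Figures~\ref{fig:bet-adj}--\ref{fig:bet-ext}; this requires a careful argument that any vertex of the complete hull in that angular slab must also be extremal on its own sub-hull for some normal in the slab, so that it is captured by the binary-search step rather than being hidden inside another sub-hull.
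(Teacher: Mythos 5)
Your proposal follows the paper's proof essentially step for step: the same perpendicular extremal-point test to decide whether $Q$ meets the hull, the same use of Lemma~\ref{lem:find-adjacent} to bracket the stabbed facet between adjacent normals of $D$, and the same appeal to Lemma~\ref{lem:count-between} to reduce to $O(\log w)$ candidate vertices. The only difference is in the final step --- the paper simply checks all $O(\log^2 w)$ pairs of candidates rather than building a small local hull --- and the geometric concern you flag (that the stabbed facet's endpoints are indeed among the vertices \emph{between} the two adjacent normals) is left implicit in the paper's own argument as well.
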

\begin{proof}
To first establish if the line hits the convex hull we will run two extremal point queries in the directions perpendicular to the line, similar to how we solved the line decision problem.
Then we use Lemma~\ref{lem:find-adjacent} to find adjacent vectors in $D$ surrounding the edge we seek, in $O(\log^2 w)$ time. Now, Lemma~\ref{lem:count-between} implies there are $O(\log w)$ points between these adjacent vectors on the canonical sub-hulls. So we have $O(\log^2 w)$ pairs to check.  Thus the above algorithm answers line stabbing queries in $O(\log^2 w)$ time. 
\end{proof}

\emph{Vertical Line Stabbing} queries are the special case of line stabbing where the query lines are oriented vertically.  \emph{Membership} queries are to given a point, $p$, decide if $p$ is on the edge of the completed convex hull.  These can be contrasted with \emph{Containment} queries which ask whether a point $p$ is contained by the completed convex hull. 

\begin{cor}\label{cor:containment}
Vertical line stabbing, membership, and containment queries on the convex hull of $[p_i, p_j]$ can also be answered in $O(\log^2 w)$ time.
\end{cor}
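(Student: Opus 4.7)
The plan is to reduce all three queries to a single vertical-line primitive: given $x_0$, return the $y$-values $y^+$ and $y^-$ at which the vertical line $x = x_0$ meets the upper and lower hulls of $[p_i,p_j]$. Vertical line stabbing is already a special case of the general line stabbing theorem just proved, so its $O(\log^2 w)$ bound comes for free; nonetheless I would give a direct implementation of the primitive, since the vertical case admits a much simpler treatment than the general one and is needed verbatim for the other two queries.

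The primitive exploits the fact that the upper hull of $[p_i,p_j]$ is the pointwise maximum, over the canonical cover, of the sub-upper-hulls (and symmetrically, the lower hull is the pointwise minimum of the sub-lower-hulls). For each of the $O(\log w)$ canonical sub-upper-hulls whose $x$-range contains $x_0$, I binary search its stored vertex list (already in $x$-sorted order by construction) for the edge straddling $x_0$ in $O(\log w)$ time and evaluate that edge at $x_0$; the maximum of the resulting $y$-values is $y^+$. The lower envelope yields $y^-$ analogously, for a total cost of $O(\log^2 w)$.

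To dispatch the other two queries: for containment of $q = (x_0, y_0)$, I first check whether $x_0$ lies between the extreme $x$-coordinates of the union of the canonical sub-hulls (an $O(\log w)$ scan over their leftmost and rightmost stored vertices), and then test $y^- \le y_0 \le y^+$ against the primitive's output. Membership is the same computation with equality: $q$ lies on the boundary iff $y_0 \in \{y^-, y^+\}$, together with a trivial extra test against the two vertical edges (if any) at the extreme $x$-coordinates. Both run in $O(\log^2 w)$ time.

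The one step that requires real care is the pointwise-max claim behind the primitive. An edge of a single canonical sub-hull need not coincide with an edge of the complete hull, so one might worry that the per-sub-hull intersection value at $x_0$ is misleading. However, the upper envelope of upper hulls is exactly the upper hull of the union of their point sets, so taking the maximum over sub-hulls at $x_0$ yields precisely the complete upper-hull value at $x_0$---which is all that the containment and membership tests need. Once this is settled, everything else is bookkeeping within the decomposition of Lemma~\ref{lem:decomp}.
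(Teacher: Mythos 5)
Your reduction of membership and containment to a vertical-line primitive is the right shape, but the implementation of the primitive rests on a false claim, and you even single it out as the step ``that requires real care'' before resolving it the wrong way. The pointwise maximum of the canonical sub-upper-hulls is \emph{not} the upper hull of the union of their point sets: each sub-upper-hull is a concave function of $x$, but the pointwise max of concave functions need not be concave, and the true upper hull is the concave majorant of that envelope, which can lie strictly above it wherever the complete hull has a bridge edge spanning two different sub-hulls. Concretely, suppose one canonical subset contributes $(0,0)$ and $(2,0)$ and another contributes the single point $(1,10)$. The complete upper hull contains the bridge edge from $(0,0)$ to $(1,10)$, so its value at $x=\tfrac12$ is $5$, while your $y^{+}$ at $x=\tfrac12$ is $0$. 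Your containment test would then wrongly reject the interior point $(\tfrac12,3)$, and your membership test would miss every point lying on a bridge edge. This is exactly the obstruction the paper flags when it notes that the canonical sub-hulls are not separated in space and merging them may require a linear number of bridge facets; it is also why the tangent-query discussion warns that per-sub-hull answers do not compose in the obvious way.

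The fix is to abandon the envelope primitive and route all three queries through the full line-stabbing machinery, which is what the paper does: vertical line stabbing is the general theorem specialized to vertical lines (Lemma~\ref{lem:find-adjacent} finds adjacent normals in $D$ bracketing the intersected facet, Lemma~\ref{lem:count-between} bounds the candidate vertices between them by $O(\log w)$, and checking the $O(\log^2 w)$ candidate pairs recovers the actual edge of the \emph{complete} hull, bridge edges included). For containment and membership one then runs a line-stabbing query with a line through $q$ and checks whether the two returned edges surround $q$, or whether $q$ lies on one of them. Your dispatch logic in the third paragraph survives unchanged once $y^{-}$ and $y^{+}$ are taken from the edges returned by that query rather than from the sub-hull envelope.
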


\section{Proximity queries}\label{sec:proximity}
In this final section we will consider windowed queries based on their proximity. This includes approximate nearest neighbor queries and the construction of proximity graphs.

\subsection{Preliminaries}
The $Z$-order (or Morton order) is a linear ordering of the points in $\mathbf{R}^2$ introduced by Morton in 1962~\cite{Mor-TR-66}. This ordering can be described in many ways, but for our purposes it is best understood as the depth-first traversal order of points in a quadtree. We will denote this linear ordering by the symbol $\leq_Z$. Considering points in their $Z$-order is a dimension reduction technique that is often used for proximity based data 
\ifFull
structures~
\cite{GooSi-ISAAC-2011, BerEppTen-IJCGA-1999, Sam-ACMCA-1984, Cha-PC-97, DerSheSle-CCCG-2008, LiaLopLeu-DE-2001}.
\else
structures~
\cite{GooSi-ISAAC-2011, BerEppTen-IJCGA-1999}.
\fi

\begin{lemma} \label{lem:z-cell}
Let $P$ be set of points stored in a quadtree, and $C$ a specific quadtree cell storing the points $z_0, \ldots, z_k$ in $Z$-order. If $p$ is a point in $P$ with $z_0 \leq_Z p \leq_Z z_k$, then $p$ is in $C$
\end{lemma}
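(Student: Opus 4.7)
The plan is to exploit the depth-first traversal characterization of the $Z$-order. Recall that, by definition in this paper, $\leq_Z$ is the order induced by a DFS traversal of the quadtree, so the points lying in any quadtree cell $C$ occupy a contiguous block in the global $Z$-order. The lemma is then essentially a contrapositive restatement of this property, and the task reduces to ruling out the possibility that some point outside $C$ is sandwiched between $z_0$ and $z_k$.

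First I would suppose, for contradiction, that $p \in P$ satisfies $z_0 \leq_Z p \leq_Z z_k$ but $p \notin C$. Let $C'$ be the quadtree cell containing $p$. Since any two quadtree cells are either nested or disjoint, and $p \notin C$, the cells $C$ and $C'$ must be disjoint. Let $C''$ be their lowest common ancestor in the quadtree; then $C''$ subdivides into four children and $C, C'$ lie in two distinct children $C_C, C_{C'}$ of $C''$.

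Next I would invoke the DFS structure of $\leq_Z$: when the traversal reaches $C''$, it completely finishes visiting all points inside one child of $C''$ before entering any other child. Hence, either every $Z$-position attained inside $C_C$ precedes every $Z$-position attained inside $C_{C'}$, or vice versa. In particular, since $z_0, z_k$ all lie in $C \subseteq C_C$ and $p$ lies in $C' \subseteq C_{C'}$, the point $p$ must satisfy either $p <_Z z_0$ or $p >_Z z_k$, contradicting the hypothesis $z_0 \leq_Z p \leq_Z z_k$.

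The main (minor) obstacle is simply making the DFS argument rigorous without re-deriving too many properties of Morton order; once we appeal to the paper's definition of $\leq_Z$ as the depth-first traversal order of the quadtree, the rest follows from the standard fact that a DFS visits all descendants of an internal node consecutively. The whole proof is a couple of sentences once this characterization is in hand.
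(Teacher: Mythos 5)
The paper states Lemma~\ref{lem:z-cell} without any proof---it is treated as a standard property of the Morton order, and no argument for it appears in the body or the appendix---so there is nothing to compare your proof against directly. Your argument is correct and is the standard way to establish this fact: points stored in a quadtree cell form a contiguous block of the DFS traversal order, and any point outside $C$ lies in a different child of the lowest common ancestor $C''$, so it is either entirely before or entirely after the block of $C$. One small imprecision: you should take $C'$ to be the \emph{leaf} cell containing $p$ (or the smallest cell containing $p$ but not contained in $C$), since ``the quadtree cell containing $p$'' is ambiguous and an ancestor cell of $C$ also contains $p$; with $C'$ a leaf, disjointness from $C$ follows exactly as you say (nesting either way would force $p \in C$ or $C = C'$). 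With that fix the LCA/DFS argument goes through and the proof is complete.
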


\subsection{Hierarchical decomposition}
To support proximity queries,
we will build a decomposition tree over time, as we did for convex hulls.  
As before, each node $v$ in the tree corresponds to a canonical subset $C_v$,
consisting of the points associated with the leaves in its subtree. However, for
proximity queries we will be storing the $Z$-order for each of the canonical
subsets\ifFull (see Fig~\ref{fig:ann-decomp}).\else.\fi{} This is equivalent to using $2$-dimensional range tree where the first coordinate is a point's time and the second coordinate is its position in $Z$-order. We will use the standard fractional cascading techniques to speed up queries~\cite{Chazelle-Guibas:1986}. In addition to the $Z$-order, we augment each internal node $v$ with a skip-quadtree $Q_u$ built over the points in $C_v$. For each cell of the quadtree we store the first and last points in the cell according to their $Z$-order.
The proof of the following lemma is given in the appendix.
\ifFull
\begin{figure}[t]
\centering
\includegraphics[height=3in]{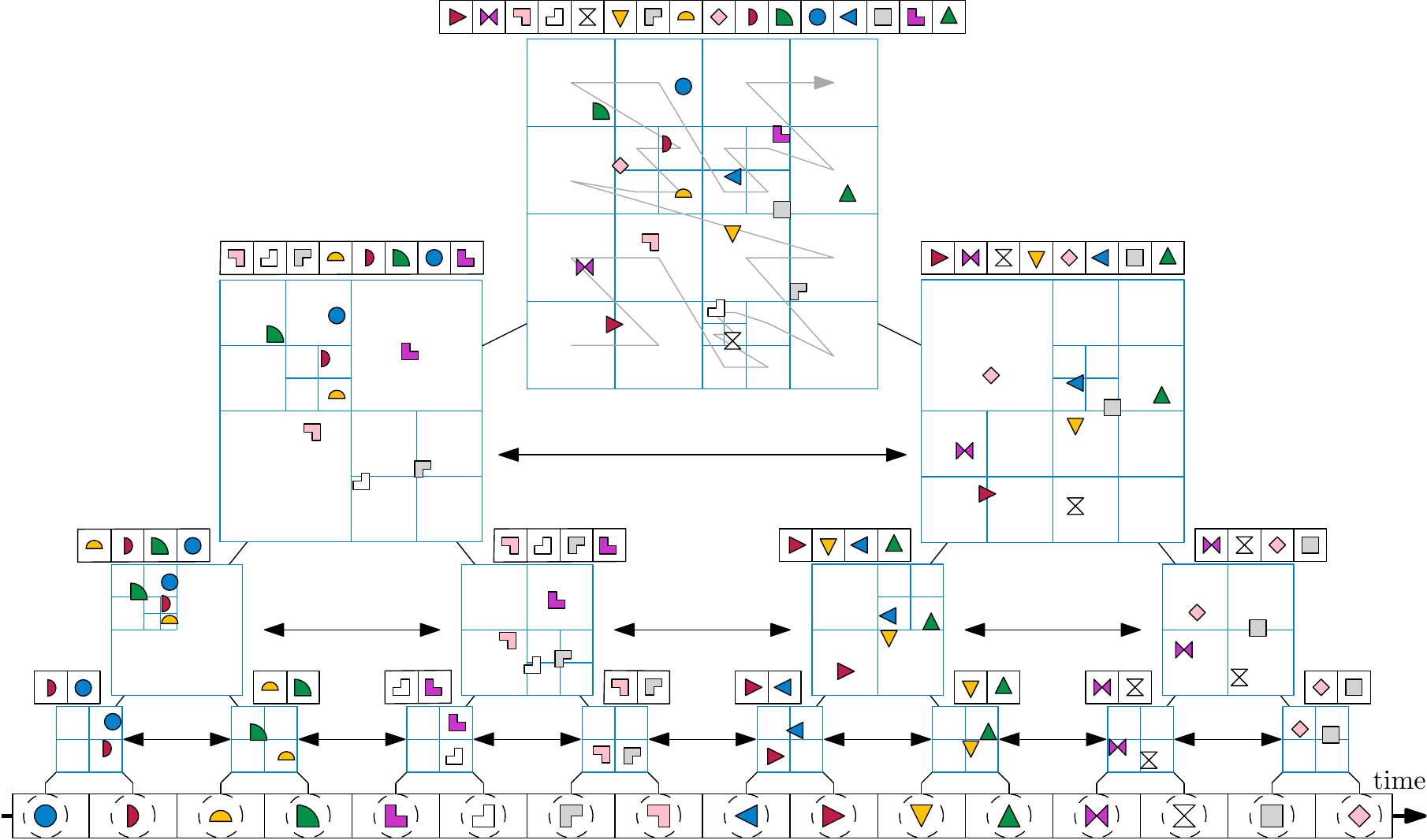}
\caption{\label{fig:ann-decomp}Decomposition tree over temporal points. Each temporal point has a corresponding geometric point with coordinates in $\mathbf{R}^2$. Each node stores the quadtree and z-order of the points in its subtree.}
\end{figure}
\fi
\begin{lemma}\label{lem:range-finding}
Any query window of width $w$ can be covered by two canonical subsets $C_1$ and $C_2$ each of width less than $2w$. Moreover, we can find $C_1$ and $C_2$ in $O(\log w)$ time. 
\end{lemma}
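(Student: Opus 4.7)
My plan is to walk up in the decomposition tree $T$ from the leaves $A[i]$ and $A[j]$ by a fixed number of levels, until reaching ancestors whose canonical subsets have the right size. Concretely, choose $h = \lceil \log_2 w \rceil$, so that in the balanced tree $T$ every subtree at height $h$ has between $w$ and $2w$ leaves. Let $u$ and $u'$ be the ancestors of $A[i]$ and $A[j]$, respectively, at height $h$, and set $C_1 = C_u$ and $C_2 = C_{u'}$. These are obtained in $O(h) = O(\log w)$ time by following parent pointers (or, equivalently, by jumping to the correct level of $T$ and using the level-links the paper already assumes).

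The size bound $|C_1|, |C_2| < 2w$ is immediate from the choice of $h$. For the covering property, I would show that $u$ and $u'$ are either the same node or adjacent at their level. Assume for contradiction that a third subtree, also rooted at height $h$, lies strictly between $u$ and $u'$ in the leaf order. Its leaves would then lie in the contiguous range from $A[i]$ to $A[j]$, forcing $j - i + 1 \geq w + 2 > w$ and contradicting the hypothesis that the window has width $w$. Hence, when distinct, $u$ and $u'$ are adjacent at level $h$, and the union of their leaves forms a single contiguous interval of $T$ containing $A[i]$ and $A[j]$. It follows that $C_1 \cup C_2 \supseteq \{p_i, p_{i+1}, \ldots, p_j\}$, as claimed.

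The one place where some care is needed is that a balanced BST is typically balanced only up to constants rather than perfectly balanced, so subtree sizes at a single depth can fluctuate slightly. The standard remedy is either to pad $T$ with dummy leaves to make it a perfect binary tree (this does not affect the $O(n \log n)$ construction time and space of Lemma~\ref{lem:decomp}, and dummy leaves contribute nothing to any canonical $Z$-order list), or to replace the uniform-depth argument with an adaptive climb that ascends from $A[i]$ (respectively $A[j]$) until the current ancestor's subtree first reaches size $\geq w$, absorbing the imbalance into the constant hidden in the ``less than $2w$'' bound. Either way, the walk remains $O(\log w)$ and the adjacency argument goes through unchanged; I expect this balancing bookkeeping to be the only real obstacle, and it is entirely standard.
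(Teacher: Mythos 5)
Your proposal is correct and follows essentially the same approach as the paper: both ascend $O(\log w)$ levels of the decomposition tree to reach two adjacent canonical subsets of size between $w$ and $2w$ (the paper climbs from $p_i$ and takes the level-link successor, you climb from both endpoints, which yields the same pair of nodes). Your explicit adjacency argument and your remark on handling imperfectly balanced trees are in fact slightly more careful than the paper's one-line justification.
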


\subsection{Approximate spherical range searching queries}
In an \emph{approximate spherical range searching query} a query point $q$ and radius $r$ are given, and all points whose distance to $q$ is less than or equal to $r$ must be returned and no points whose distance to $q$ is greater than $(1+\epsilon)r$ may be returned, where $\epsilon$ is a fixed constant. The points whose distance to $q$ is between $r$ and $(1+\epsilon)r$ may or may not be reported. 
Due to space constraints, the proof of the following theorem is given in the
appendix.
\begin{theorem} \label{lem:arrq}
    Approximate time-windowed 
$d$-dimensional spherical range reporting queries can be performed in $O(\log w +k)$ time, for any fixed dimension $d\ge 2$.
\end{theorem}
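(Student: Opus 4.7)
The plan is to reduce a windowed approximate spherical range search to approximate range searches on two stored skip-quadtrees, using the $Z$-order augmentation to restrict the reported points to the query window.

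First, I would apply Lemma~\ref{lem:range-finding} to locate, in $O(\log w)$ time, two canonical subsets $C_1$ and $C_2$ with $|C_\ell|<2w$ whose union contains every window point. Each $C_\ell$ carries a skip-quadtree $Q_{C_\ell}$, a $Z$-ordering of its points, and, at every quadtree cell, the first and last $Z$-order indices among its contained points.

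Next, for each $C_\ell$ I would run the standard approximate spherical range search on $Q_{C_\ell}$: descend the tree, prune any cell whose minimum distance to $q$ exceeds $(1+\epsilon)r$, mark as a \emph{full} cell any cell whose maximum distance to $q$ is at most $r$, and recurse on the rest until the cell diameter is $O(\epsilon r)$. Standard packing arguments for compressed/skip-quadtrees bound the number of visited cells by $O(\log w + (1/\epsilon)^{d-1})$; since $d$ and $\epsilon$ are fixed this is $O(\log w)$.

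The heart of the argument is to output, for each full cell, only the points that also fall inside the time window $[p_i,p_j]$. By Lemma~\ref{lem:z-cell}, the points of $C_\ell$ contained in such a cell occupy a contiguous interval of the $Z$-ordering of $C_\ell$, whose endpoints are precomputed. Viewing the stored (time, $Z$-order) augmentation as a two-dimensional range tree and invoking fractional cascading~\cite{Chazelle-Guibas:1986}, I would spend $O(\log w)$ once to locate the window boundary inside the $Z$-list of $C_\ell$, after which each subsequent full cell can be processed in $O(1+k_C)$ time via cascading pointers, where $k_C$ is its in-window count.

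The main obstacle is dealing with the boundary cells on which recursion is halted by the $\epsilon r$ diameter rule, since these may contain points that are genuinely outside the reported ball or outside the time window. Each such cell requires individual filtering of its candidate points against both the window and the true ball of radius $r$, but there are only $O((1/\epsilon)^{d-1})=O(1)$ such cells, each contributing $O(1)$ extra work using the same $Z$-order/fractional-cascading lookup. Summing the $O(\log w)$ navigation cost with the per-cell $O(1+k_C)$ reporting cost, and aggregating over $C_1$ and $C_2$, yields total query time $O(\log w+k)$ as claimed.
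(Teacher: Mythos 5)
Your proposal follows essentially the same route as the paper: cover the window with two canonical subsets via Lemma~\ref{lem:range-finding}, search their skip-quadtrees for the $O(\epsilon^{1-d})$ cells covering the query region, and use the precomputed $Z$-order interval $[z_0,z_1]$ of each cell to turn per-cell reporting into a two-dimensional (time, $Z$-order) range query restricted to $[p_i,p_j]$. The one step that does not hold up is your treatment of the boundary cells: filtering their candidate points against the \emph{true} ball of radius $r$ is not $O(1)$ work per cell (a cell of diameter $O(\epsilon r)$ may contain arbitrarily many in-window points, and those you discard are not charged to $k$), so taken literally this breaks the $O(\log w + k)$ bound. The fix is simply to drop that filtering: the approximate guarantee permits reporting any point within $(1+\epsilon)r$, and every point in such a boundary cell qualifies, which is exactly why the paper treats all cells contained in the \emph{approximate} region uniformly as inner cells and reports all their in-window points. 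Your fractional-cascading refinement of the per-cell queries is sound but unnecessary, since the number of cells is a constant depending only on $\epsilon$ and $d$.
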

In our definition of an approximate range query we are assuming regions
are perfectly spherical. 
However, our results can be extended to more general regions using known techniques~\cite{EppGooSun-SOCG-05}.
Note that in the special case where the query range is an axis aligned rectangle,
we can answer an exact orthogonal range query in optimal time using known techniques.
In the \emph{orthogonal range searching problem} we are given a collection points in the plane and an axis aligned query rectangle from which we must report the set of points contained within the rectangle. 
Alstrup~\emph{et al.}~\cite{abr-ndsfors-00} give a solution for orthogonal range
searching in $\R^3$ using $O(n \log^{1 +\epsilon} n)$ space and $O(\log n + k)$
query time. Simply by treating time as a spatial dimension, this allows us to
answer windowed $2$-dimensional orthogonal range searching queries. 

\subsection{Approximate nearest neighbor queries}
Given a set of points $P$ and query point $q$ an $\epsilon$-approximate nearest neighbor query asks for a point $p$ in $P$ whose distance to $q$ is at most $(1+\epsilon)r$ where $r$ is the distance to $q$'s nearest neighbor in $P$. 
The following lemma establishes a relationship between approximate nearest
neighbor queries on multidimensional points and \emph{successor} queries in the
$Z$-order of those points.
Recall that in the \emph{successor query problem} we are given a set of points $A$ on the real line and a query point $q$ on the line, and we must report the smallest element in $A$ greater than $q$. 

\begin{lemma}[Liao, Lopez and Leutenegger~\cite{LiaLopLeu-DE-2001}]
\label{lem:cANN}
Let $P$ be a set of points in $R^d$. Define a constant $c = \sqrt{d}(4d+4) + 1$. 
Suppose that we have $d+1$ shifted lists $P + v^{j}$ for $j= 0, \ldots, d$ (the specific values of $v^j$ are given in~\cite{LiaLopLeu-DE-2001}) , each one sorted according to its $Z$-order.  We can find a query point $q$'s $c$-approximate nearest neighbor in $P$ by examining  the $2(d+1)$ predecessors and successors of $q$ in the lists.
\end{lemma}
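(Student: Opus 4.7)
My plan is a shift argument combined with the Z-order contiguity property of Lemma~\ref{lem:z-cell}. Let $p^*$ denote the true nearest neighbor of $q$ at distance $r = \|q - p^*\|$. For each shift $j$, I would consider $C^j$, the smallest quadtree cell containing both $q + v^j$ and $p^* + v^j$; this cell has side length at least $r/\sqrt{d}$, but could be arbitrarily much larger if the two shifted points straddle a splitting hyperplane at a high level of the quadtree. The strategy is to show that across the $d + 1$ shifts, at least one forces $C^j$ to have side length $s \le (4d+4)r$, and then to read off an approximate nearest neighbor from the Z-order of that shift.

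Assuming a ``good'' shift $j^*$ with $C^{j^*}$ of side length $s \le (4d+4)r$, I would apply Lemma~\ref{lem:z-cell} to conclude that the points of $P + v^{j^*}$ lying inside $C^{j^*}$ occupy a contiguous interval of the Z-order. Conceptually inserting $q + v^{j^*}$ into this order, I would argue that at least one of its Z-predecessor or Z-successor must lie inside $C^{j^*}$, since $q + v^{j^*}$ itself sits in $C^{j^*}$ and so falls within or adjacent to the contiguous run. The distance from $q + v^{j^*}$ to any such point is then bounded by the diagonal of $C^{j^*}$, that is, at most $s\sqrt{d} \le \sqrt{d}(4d+4)\, r$. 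Since translation preserves distances, the corresponding candidate in $P$ is within $\sqrt{d}(4d+4)\, r$ of $q$, giving the desired $c$-approximation among the at-most $2(d+1)$ candidates examined across all shifts.

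The hard part will be establishing the existence of a good shift: specifying $d + 1$ vectors $v^0, \ldots, v^d$ and proving that for every pair $(q, p^*)$ at distance $r$, at least one shift confines them to a quadtree cell of side length at most $(4d+4)r$. A natural construction uses shifts with a simple additive structure along each coordinate, for instance taking $v^j$ proportional to $j \cdot (1,1,\ldots,1)$ scaled to the working precision. The proof of the shift property proceeds by pigeonhole over coordinate hyperplanes: a level-$\ell$ splitting hyperplane along axis $i$ can separate two points at axis-$i$ distance at most $r$ only when one of them lies in a bad window of width $r$ modulo $2^{-\ell}$, and among $d+1$ well-chosen shifts, some shift must avoid every such bad window in every coordinate simultaneously. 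The additive $+1$ in the constant $c$ is the usual triangle-inequality slack accounting for the boundary case in which the insertion point of $q + v^{j^*}$ is flush with the boundary of $C^{j^*}$ rather than strictly inside it, and I would deal with this by observing that in that case the nearer of the predecessor/successor still lies within one cell of $q + v^{j^*}$.
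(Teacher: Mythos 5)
The paper does not prove Lemma~\ref{lem:cANN} at all: it is imported as a black box from Liao, Lopez and Leutenegger~\cite{LiaLopLeu-DE-2001}, so there is no in-paper argument to compare yours against. Judged on its own terms, your sketch reconstructs the standard proof of this result correctly in outline: a shifting lemma guaranteeing that some shift $j^*$ traps $q$ and its true nearest neighbor $p^*$ in a common quadtree cell of side at most $(4d+4)r$, followed by Z-order contiguity of quadtree cells (Lemma~\ref{lem:z-cell}) to argue that the predecessor or successor of $q+v^{j^*}$ in the list $P+v^{j^*}$ lies in that cell, hence within its diagonal $\sqrt{d}(4d+4)\,r$ of $q$. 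Two small cautions on that second half. Your contiguity step tacitly uses that the set of \emph{all} points of a quadtree cell forms an interval of the global Z-order, not merely that the stored points of $P$ in the cell are consecutive among stored points (which is what Lemma~\ref{lem:z-cell} literally asserts); the stronger fact is true and standard, but it is what rules out a predecessor or successor of $q+v^{j^*}$ lying Z-between $q+v^{j^*}$ and the cell's extreme stored points yet outside the cell. Also, the $+1$ in $c$ is ordinary triangle-inequality slack (bound the candidate's distance to $p^*$ by the cell diagonal, then add $r$); your direct bound on the $q$-to-candidate distance in fact yields the slightly better constant $\sqrt{d}(4d+4)$, so nothing is lost there.

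The genuine gap is the part you explicitly defer: the existence of a good shift. As written, "among $d+1$ well-chosen shifts, some shift must avoid every bad window in every coordinate simultaneously" is the conclusion of the shifting lemma, not an argument for it. The missing quantitative step is that with shifts $v^j = \frac{j}{d+1}(D,\ldots,D)$ over a domain of side $D$, for each fixed coordinate $i$ at most \emph{one} value of $j\in\{0,\ldots,d\}$ can place $p^*_i+v^j_i$ and $q_i+v^j_i$ on opposite sides of a grid hyperplane of spacing at least $2(d+1)\,|p^*_i-q_i|$; summing this over the $d$ coordinates eliminates at most $d$ shifts and leaves at least one that is good in all coordinates at the target resolution. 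Without that "each coordinate kills at most one shift" count, nothing forces $d+1$ shifts to suffice, and the constant $(4d+4)$ is unexplained. So your plan is the right one and matches the cited proof, but it is incomplete at precisely the step that carries the lemma.
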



In the windowed model, a successor query corresponds to a two-dimensional
geometric query, where the time of a point maps to its $x$ coordinate,
and the value of the point maps to its $y$ coordinate. To find the
successor of value $q$ in window $[t_1, t_2]$, we slide a horizontal line
segment $[(t_1, q), (t_2, q)]$ upward, and the first point we hit is the answer
(see Fig.~\ref{fig:window-successor}).

We can answer windowed successor queries in $O(n \log n)$ space and $O(\log w)$ time per
query using a structure similar to a $2$-d range tree. 
We build a decomposition tree over the time of the points, where each internal node
stores the points in its canonical set sorted by value. 
We answer a query by performing a successor
query at each of the $O(\log w)$ nodes which together cover the window, and we
speed up the iterative queries at the internal nodes using fractional cascading.

Now, we can leverage our windowed successor data structure to answer windowed
approximate nearest neighbor queries. 
In each node of our decomposition tree we store $d+1$ copies of its canonical set, 
sorted according to the $d+1$ shifted versions of the $Z$-order from Lemma~\ref{lem:cANN}.
Given a query point $q$ and window $W = [p_i, p_j]$, we find the windowed
successor and predecessor in each of the shifted $Z$-orders. By
Lemma~\ref{lem:cANN}, one of the $2(d+1)$  points which is closest points is
guaranteed to be a $c$-approximate nearest neighbor of $q$. 

Now, we refine our answer to an $\epsilon$-nearest neighbor 
via a binary search over potential distances to the $\epsilon$-nearest neighbor. 
 The refinement process requires $O(\log \frac{1}{\epsilon}) = O(1)$ approximate
 spherical emptiness queries~\cite{AryDafMou-ESA}.
 For this strategy to work in the windowed model, we must support
 \emph{windowed} approximate spherical emptiness queries, 
 which can be done with minor modifications to our
windowed approximate range query structure. Namely, we report only the first
point found, or empty if the range is empty. This means the entire binary search
takes $O(\log w)$ time to complete. Thus, we have proven the following theorem.

\begin{theorem}
Approximate time-windowed nearest neighbor queries in $\R^d$ can be performed in $O(\log w)$ time,
for fixed $d\ge 2$.
\end{theorem}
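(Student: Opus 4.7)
The plan is to combine the hierarchical time-decomposition with the shifted $Z$-order framework from Lemma~\ref{lem:cANN}. At each internal node $v$ of the decomposition tree I would store $d+1$ copies of its canonical subset $C_v$, one sorted by each of the $d+1$ shifted $Z$-orders. Since each point lives in $O(\log n)$ canonical subsets, the total space is $O(n\log n)$.

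Given a query point $q$ and window $[p_i,p_j]$, I would first locate the $O(\log w)$ canonical nodes covering the window via level-link navigation as in Lemma~\ref{lem:decomp}. For each of the $d+1$ shifted $Z$-orders I would perform a windowed successor/predecessor query on $q$: run a standard successor search in each of the $O(\log w)$ covering canonical subsets and take the best answer. To avoid an $O(\log^2 w)$ blowup across the covering subsets, I would apply fractional cascading between the sorted lists along the decomposition tree, just as in a two-dimensional range tree, so that after the first $O(\log w)$-cost binary search the remaining sibling searches run in $O(1)$ each. This yields $2(d+1)$ candidate points in $O(\log w)$ time, and by Lemma~\ref{lem:cANN} the closest of these to $q$ is a $c$-approximate nearest neighbor in the window.

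To refine this $c$-approximation into a $(1+\epsilon)$-approximation, I would binary-search over distances in a geometrically shrinking interval anchored at the candidate distance, performing one approximate spherical emptiness query per probe. Each probe can be answered by a light variant of the approximate spherical range structure of Theorem~\ref{lem:arrq}: stop at the first witness and report emptiness otherwise. Because only $O(\log(c/\epsilon)) = O(1)$ probes are needed for fixed $\epsilon$ and fixed $d$, the total refinement cost is $O(\log w)$, matching the cost of the initial $c$-approximation phase.

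The main obstacle I foresee is guaranteeing that every constituent query really runs in $O(\log w)$ rather than $O(\log w \log n)$ or $O(\log^2 w)$. The windowed successor structure must carry fractional-cascading bridges up the decomposition tree in each of the $d+1$ shifted $Z$-orders so that the $O(\log w)$ sibling searches collapse into a single $O(\log w)$ traversal; and the emptiness-query variant of Theorem~\ref{lem:arrq} must reuse the per-node skip-quadtree and $Z$-order already stored in the decomposition, so that each of the $O(1)$ probes of the binary search costs only $O(\log w)$ rather than reconstructing the covering step from scratch. Once these two engineering pieces are in place, summing the costs gives the claimed $O(\log w)$ query time.
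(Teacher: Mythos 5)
Your proposal matches the paper's own argument essentially step for step: the $d+1$ shifted $Z$-order lists stored per decomposition node, windowed successor/predecessor queries accelerated by fractional cascading to obtain the $2(d+1)$ candidates and hence a $c$-approximate answer via Lemma~\ref{lem:cANN}, followed by an $O(1)$-probe distance refinement using a first-witness variant of the windowed approximate spherical range structure. The approach and the engineering concerns you flag are the same as in the paper, so no further comparison is needed.
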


\subsection{Proximity graph constructions}
Finally, we can use these methods to construct most interesting proximity graphs
in linear time.  As first step we use our $Z$-order decomposition tree to find a
set of canonical subsets exactly covering our query window, taking $O(\log w)$
time. Then we merge their $Z$-orders into a $Z$-order for the points in the
query window, taking $O(w)$ time. From the $Z$-order we compute the compressed
quadtree from of the points in $O(w)$ time~\cite{Cha-IPL-2008}, and from the
compressed quadtree we compute the well-separated pair decomposition also in
linear $O(w)$ time~\cite{lm-tsst-12}. With the well-separated pair decomposition the
following can be constructed in $O(w)$ time: Delaunay triangulation, minimum spanning tree,
nearest neighbor graph and Gabriel graph~\cite{lm-tsst-12}.

\ifFull
\section{Conclusion}
We have provided a diverse algorithmic study of methods for querying
time-sequenced geometric data, and within this model we have computed skylines, convex hull, and proximity queries. Our preprocessing time and space is far superior to naive calculation and storage of the solutions for all possible windows, and the algorithms for satisfying queries on a given window are significantly faster than building the results simply from the point information.

Left as open problems are reductions in the complexity of vertical stabbing and membership on the convex hull.  Currently the runtimes for these queries match general convex hull line stabbing and containment, respectively.  It has been shown in some models that vertical stabbing and membership are structurally simpler than their counterparts~\cite{Demaine-Patrascu:2007}, and thus it may be possible to find speedup in the windowed event model.  Additionally, lower bounds for the queries in our model remain to be demonstrated.

\fi

\section*{Acknowledgements} 
This research was supported by 
ONR grant N00014-08-1-1015 
and NSF grants 0953071, 1011840, and 1228639.

{\small
    \raggedright
\bibliographystyle{splncs03}
\bibliography{windowed-geometry}}

\clearpage
\appendix

\section{Omissions from Section~\ref{sec:convex}}

\begin{proof}[Theorem~\ref{thm:gift-wrap}]
We first compute the canonical cover. Then for each of the canonical sub-hulls we perform a binary search over its points, locating the point $p$ such that the vector from $q$ to $p$ forms the maximal angle with the positive $x$-axis.
 Then, from these $O(\log w)$ points, we choose the point with maximal angle,
 completing our gift wrapping query (see Fig.~\ref{fig:giftwrap}).
The total time is dominated by the $O(\log^2 w)$ time for the binary searches. A counterclockwise gift wrapping query is answered in a similar manner.
\end{proof}

\begin{proof}[Corollary~\ref{cor:hull-wrap}]
To compute the full convex hull of $[p_i, p_j]$, we begin by locating a point we know to be on the hull, e.g., the point with the lowest $x$-value, which can be done in $O(\log w)$.  
We then perform gift wrapping queries in $O(\log^2 w)$ time per query until the whole hull is returned. 
 Since we perform one query per point on the hull, we perform $h$ before returning to our starting point.  Therefore we can compute the entirety of the hull in $O(h\log^2 w)$ time.
\end{proof}

\begin{proof}[Corollary~\ref{cor:tangent}]
We first perform a containment query (Corollary~\ref{cor:containment}) in $O(\log^2 w)$ time, returning an exception if the query point is contained in the hull. Now, given a point $q$ outside the convex hull we suppose that $q$ is on the hull, and performing a gift wrapping in clockwise and counterclockwise directions in $O(\log^2 w)$ time. Producing the requested tangents.
\end{proof}

\begin{proof}[Theorem~\ref{thm:linear-prog}]
For this query it suffices to solve the problem for each of the canonical
sub-hulls, and then return the solution that is furthest in the query direction.
Since this is an iterative searching problem,  we can use \emph{fractional
cascading}~\cite{Chazelle-Guibas:1986}. For our decomposition tree our catalog
graph is of $O(1)$ degree, as each decomposition node connects to at most one
ancestor, two children, and a left and right node via level-links (see
Fig.~\ref{fig:decomp}). This allows us to construct a recursive relation between
augmented catalogs, in which we share every sixteenth element and create
sufficient bridge pointers to allow constant time subsequent searches, while
still maintaining storage and preprocessing proportional to the size of the
decomposition node. This cascading structure allows us to solve the query in
$O(\log w)$ time. This query does not take $O(\log n)$ time because, beginning
at the edge of our window, we can navigate through our fractional cascading
structure without routing outside of the window space (see
Fig.~\ref{fig:window-walking}). 
\end{proof}

\begin{proof}[Corollary~\ref{cor:decision}]
 This query reduces to two extremal point queries in the directions perpendicular to the line.  The two extremal points are separated by the query line if and only if the line intersects the convex hull.
\end{proof}

\begin{proof}[Lemma~\ref{lem:count-between}]
Suppose there are four or more points in a sub-hull between the two edge normals.  Then there are at least three edges in that sub-hull with normals between $u$ and $v$ one of which must be strictly between $u$ and $v$.  However $u$ and $v$ were adjacent in the complete list of edge normals.  This is a contradiction so there must be less than four points between $u$ and $v$ in each sub-hull.  Because there are at most $\log w$ sub-hulls, the total number of points between $u$ and $v$ is $3 \log w$.
\end{proof}

\begin{proof}[Corollar~\ref{cor:containment}]
Vertical line stabbing queries are solved by the above algorithm.  Membership and containment queries can both be answered with line stabbing queries of lines passing through the query points.  If the two edges found surround $p$ then we know it is contained in the convex hull and if $p$ is on either edge then we know it is a member of the hull.
\end{proof}

\section{Omissions from Section~\ref{sec:proximity}}

\begin{proof}[Lemma~\ref{lem:range-finding}]
Let $W = [p_i, p_j]$ be a window of width $w$. Set $C_1$ to be the largest canonical subset containing $p_i$ of width less than $2w$, and set $C_2$ to be the canonical subset adjacent to $C_1$ in level link list in the direction of increasing time. These sets can be found by following parent pointers for at most $O(\log w)$ levels. Finally since the widths of $C_1$ and $C_2$ are at least $w$ they cover $W$.
\end{proof}

\begin{proof}[Theorem~\ref{lem:arrq}]
Let $W = [p_i, p_j]$ be a window of width $w$. To perform a query with $W$ we first use Lemma~\ref{lem:range-finding} to find two canonical subsets $C_u$ and $C_v$ covering our query window, corresponding to nodes $u$ and $v$ in the decomposition tree, in $O(\log w)$ time. Then we search the quadtrees $Q_u$ and $Q_v$ to find their respective sets of inner cells (cells entirely contained in the approximate region) $\mathcal{I}_u$ and $\mathcal{I}_v$ each set of size $O(\epsilon^{1-d})$ where $d$ is the dimension. This can be done in $O(\log w + \epsilon^{1-d}) = O(\log w)$ time~\cite{EppGooSun-SOCG-05}.

Then for each of the inner cells $I \in \mathcal{I}_u \cup \mathcal{I}_v$ we perform a $2$-dimensional range reporting query with the rectangle $[p_i, p_j] \times [z_0,z_1]$, where $z_0$ and $z_1$ are the first and last point in $I$ in $Z$-order,
and record the union of the points reported in $O(\log w + k)$ time where $k$ is output size. 
By a simple packing argument, the total number of inner cells is bounded by a
function of the constants $\epsilon$ and $d$~\cite{GooSi-ISAAC-2011}.
Thus, the total time for the query is $O(\log w + k)$.

By construction we have that $W \subseteq C_u \cup C_v$, i.e. all points in the
window $W$ are contained in leaves of either $T_u$ or  $T_v$. Furthermore, the set of inner cells produced contain all geometric points in the approximate query range independent of their time stamp. 
The query rectangle at each inner cell guarantees that we return precisely the
set of points contained in that cell which also have timestamps in the query window.
Thus, returned points are exactly the set of points which are both temporally in the window $W$ and geometrically in the approximate range.
\end{proof}

\section{Omitted Figures}

Some figures which are not legible in the two column format are included on the
following page.

\begin{figure}[h!b]
\centering
\includegraphics[width=.19\textwidth]{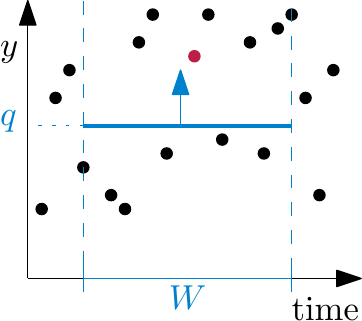}
\caption{\label{fig:window-successor}
    Windowed successor query.}
\end{figure}

\begin{figure*}[b!]
\centering
\includegraphics[width=.8\linewidth]{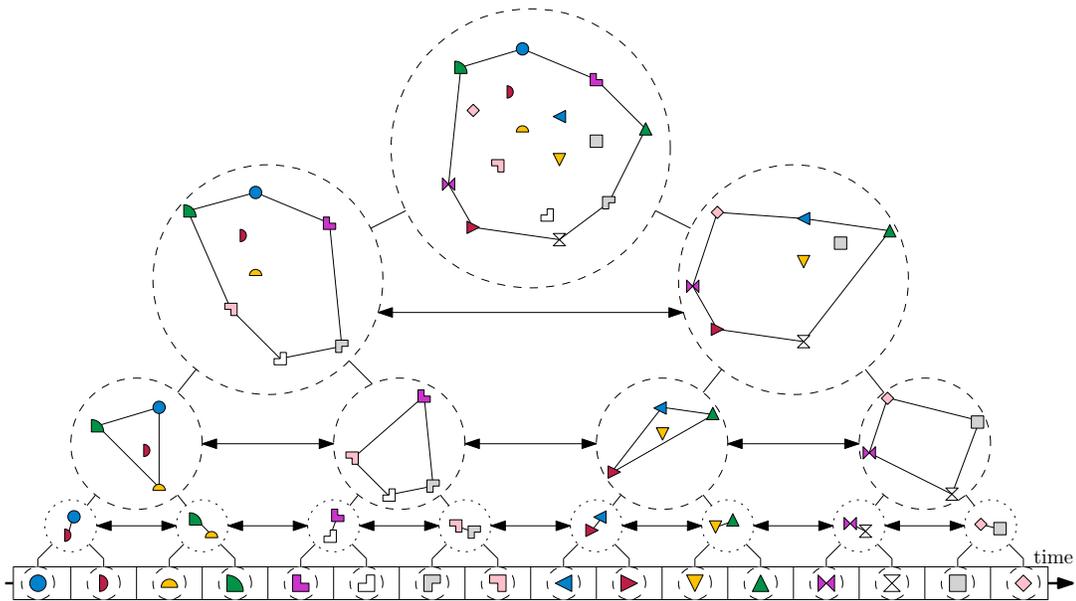}
\caption{
    \label{fig:decomp}
    Decomposition tree of temporal points. Each temporal point has a corresponding geometric point with coordinates in $\mathbf{R}^2$. Each node stores the convex hull of the points in its subtree.
}
\end{figure*}

\begin{figure*}[b!]
\centering
\includegraphics[width=.8\linewidth]{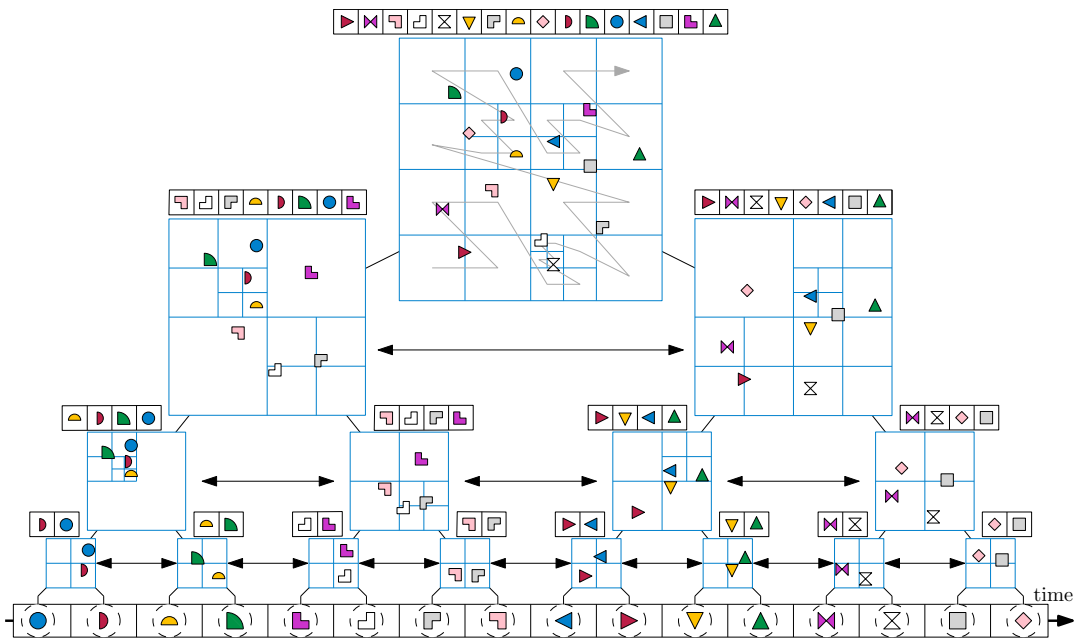}
\caption{\label{fig:ann-decomp}Decomposition tree over temporal points. Each temporal point has a corresponding geometric point with coordinates in $\mathbf{R}^2$. Each node stores the quadtree and z-order of the points in its subtree.}
\end{figure*}

\end{document}